\newtheorem{thm}{Theorem}
\newtheorem{cor}{Corollary}
\newtheorem{lem}{Lemma}
\newtheorem{prop}{Proposition}
\theoremstyle{remark}
\theoremstyle{definition}
\newtheorem{defi}[thm]{Definition}
\renewcommand{\leq}{\leqslant}
\renewcommand{\geq}{\geqslant}
\newcommand{\IC}{\mathbb{C}}
\newcommand{\IZ}{\mathbb{Z}}
\newcommand{\IN}{\mathbb{N}}
\newcommand{\IFp}{\mathbb{F}_p}
\newcommand{\IF}{\mathbb{F}}
\newcommand{\IFtwo}{{\mathbb{F}_2}}
\newcommand{\VP}{\mathsf{VP}}
\newcommand{\VNP}{\mathsf{VNP}}
\newcommand{\VNPe}{\mathsf{VNP_e}}		
\newcommand{\unifVP}{\mathsf{unif\text{-}VP}}
\newcommand{\unifVNP}{\mathsf{unif\text{-}VNP}}
\newcommand{\asize}{\mathsf{asize}}
\newcommand{\asizez}{\mathsf{asize_0}}
\renewcommand{\P}{\mathsf{P}}
\newcommand{\PH}{\mathsf{PH}}
\newcommand{\poly}{\mathsf{poly}}
\newcommand{\size}{\mathsf{size}}
\newcommand{\sharpP}{\mathsf{\#P}}
\newcommand{\parityP}{\mathsf{\oplus P}}
\newcommand{\ModpP}{\mathsf{Mod}_{p}\mathsf{P}}
\newcommand{\PP}{\mathsf{PP}}
\newcommand{\coRP}{\mathsf{coRP}}
\newcommand{\NP}{\mathsf{NP}}
\newcommand{\MA}{\mathsf{MA}}
\newcommand{\AM}{\mathsf{AM}}
\newcommand{\AMA}{\mathsf{AMA}}
\newcommand{\coNP}{\mathsf{coNP}}
\newcommand{\CH}{\mathsf{CH}}
\newcommand{\CeP}{\mathsf{C_{=}P}}
\newcommand{\GapP}{\mathsf{GapP}}
\newcommand{\HC}{\ensuremath{\text{HC}}}
\newcommand{\HN}{\ensuremath{\text{HN}}}
\newcommand{\per}{\ensuremath{\text{per}}}
\newcommand{\weight}{\omega}
\newcommand{\product}[2]{\underset{#1}{\overset{#2}{\prod}}} 
\newcommand{\summ}[2]{\underset{#1}{\overset{#2}{\sum}}} 
\newcommand{\sumonwords}[2]{\summ{#1 \in \{0,1\}^{#2}}{}} 
\newcommand{\interpol}[3]{\product{i=1}{#3}{#1}_i^{{#2}_i}(1-{#1}_i)^{1-{#2}_i}} 
\title{On fixed-polynomial size circuit lower bounds for uniform polynomials in the sense of Valiant}
\begin{document}

\author{Hervé Fournier
\thanks{Univ Paris Diderot, Sorbonne Paris Cit\'e,
Institut de Math\'ematiques de Jussieu, UMR 7586 CNRS, F-75205 Paris, France.
Email: \texttt{fournier@math.univ-paris-diderot.fr}.}
\and
Sylvain Perifel
\thanks{Univ Paris Diderot, Sorbonne Paris Cit\'{e}, LIAFA, UMR 7089
  CNRS, F-75205 Paris, France.
Email: \texttt{sylvain.perifel@liafa.univ-paris-diderot.fr}.}
\and
Rémi de Verclos
\thanks{ENS Lyon. F-69342 Lyon, France.
Email: \texttt{remi.de\_joannis\_de\_verclos@ens-lyon.fr}.}
}

\maketitle
\begin{abstract}
We consider the problem of fixed-polynomial lower bounds
on the size of arithmetic circuits computing uniform families of polynomials.
Assuming the Generalised Riemann Hypothesis (GRH), we show that for all $k$, there exist polynomials with coefficients in $\MA$ having no arithmetic circuits of size $O(n^k)$ over $\IC$ (allowing any complex constant). We also build a family of polynomials that can be evaluated in $\AM$ having no arithmetic circuits of size $O(n^k)$.
Then we investigate the link between fixed-polynomial
size circuit bounds in the Boolean and arithmetic settings. In characteristic zero, it is proved that $\NP \not\subset \size(n^k)$, or $\MA \subset \size(n^k)$, or $\NP=\MA$ imply lower bounds on the circuit size of uniform polynomials in $n$ variables from the class $\VNP$ over~$\IC$, assuming GRH. In positive characteristic $p$, uniform polynomials in $\VNP$ have circuits of fixed-polynomial size if and only if 
both $\VP = \VNP$ over $\IF_p$ and $\ModpP$ has circuits of fixed-polynomial size.
\end{abstract}


\section{Introduction}

Baur and Strassen~\cite{BaurS83} proved in 1983 that the number of arithmetic
operations needed to compute the polynomials $x_1^n + \ldots + x_n^n$
is $\Omega(n \log n)$.
This is still the best lower bound on uniform
polynomials on $n$ variables and of degree $n^{O(1)}$, if uniformity means
having circuits computed in polynomial time.

If no uniformity condition is required, 
lower bounds for polynomials have been known since Lipton~\cite{Lipton75}.
For example, Schnorr~\cite{Schnorr78}, improving on~\cite{Lipton75} and
Strassen~\cite{Strassen74}, showed for any $k$ a
lower bound $\Omega(n^k)$ on the complexity of a family $(P_n)$ of univariate
polynomials of degree polynomial in $n$ -- even allowing arbitrary complex constants in the circuits.
The starting point of Schnorr's method is to remark that
the coefficients of a polynomial computed by a circuit using
constants $\alpha = (\alpha_1,\ldots,\alpha_p)$ is given by a polynomial
mapping in $\alpha$. Hence, finding hard polynomials reduces to finding
a point outside the image of the mapping associated to some circuit which
is universal for a given size.
This method has been studied and extended by Raz~\cite{Raz10}.

In the Boolean setting, this kind of fixed-polynomial lower bounds has already
drawn a lot of attention, from Kannan's result~\cite{Kannan82} proving that
for all $k$, $\mathsf{\Sigma_2^p}$ does not have circuits of size $n^k$,
to~\cite{FortnowSW09}, delineating the frontier of Boolean classes which are
known to have fixed-polynomial size circuits lower bounds.  It might seem easy
to prove similar lower bounds in the algebraic world, but the fact that
arbitrary constants from the underlying field (e.g. $\IC$) are allowed
prevents from readily adapting Boolean techniques.

Different notions of uniformity can be thought of, either in terms of the circuits computing the
polynomials, or in terms of the complexity of computing the coefficients. For
instance, an inspection of the proof of Schnorr's result mentioned above
shows that the coefficients of the polynomials can be computed
in exponential time. But this complexity is generally considered
too high to qualify these polynomials as uniform.

The first problem we tackle is the existence of hard polynomials (i.e. without small circuits over $\IC$) but with coefficients that are
``easy to compute''.
The search for a uniform family of polynomials with no circuits of size $n^k$
was pursued recently by Jansen and Santhanam~\cite{JansenS12}. They show in
particular that there exist polynomials with coefficients in $\MA$ (thus,
uniform in some sense) but not computable by arithmetic circuits of size $n^k$
over $\IZ$.\footnote{Even though this result is not stated explicitly in their
  paper, it is immediate to adapt their proof to our context.} Assuming the
Generalised Riemann Hypothesis (GRH), we extend their result to the case of
circuits over the complex field.
GRH is used to eliminate the complex constants in the circuits, by considering
solutions over $\IFp$ of systems of polynomial equations, for a small prime
$p$, instead of solutions over $\IC$.
In fact, the family of polynomials built by Jansen and Santhanam is also uniform in the following way: it can be evaluated at integer points in $\MA$. Along this line, we obtain families of polynomials without arithmetic circuits of size $n^k$ over $\IC$ and that can be evaluated in $\AM$.
The arbitrary complex constants prevents us to readily adapt Jansen and
Santhanam's method and we need to use in addition the $\AM$ protocol
of Koiran~\cite{Koiran96} in order to decide whether a system of polynomial equations
has a solution over~$\IC$.

Another interesting and robust notion of uniformity is provided by Valiant's
algebraic class $\VNP$, capturing the complexity of the permanent. The usual
definition is non-uniform, but a natural uniformity condition can be required
and gives two equivalent characterisations: in terms of the uniformity of
circuits and in terms of the complexity of the coefficients. This is one of the
notions we shall study in this paper and which is also used by Raz~\cite{Raz10} (where the term {\it explicit} is used to denote uniform families of $\VNP$ polynomials).
The second problem we study is therefore to give an $\Omega(n^k)$ lower bound on the complexity of
an $n$-variate polynomial in the uniform version of the class $\VNP$.
Note that from Valiant's criterion, it corresponds to the coefficients being in $\GapP$, so it is a special case of coefficients that are easy to compute.
 Even though $\MA$ may seem a small class in comparison with
$\GapP$ (in particular due to Toda's theorem $\PH\subseteq\P^{\sharpP}$),
the result obtained above does not yield lower bounds for the uniform version of $\VNP$.

We show how fixed-polynomial circuit size lower bound on uniform $\VNP$ is
connected to various questions in Boolean complexity. For instance, the
hypothesis that $\NP$ does not have circuits of size $n^k$ for all $k$,
or the hypothesis that $\MA$ has circuits of size $n^k$ for some
$k$, both imply the lower bound on the uniform version of $\VNP$
assuming GRH. Concerning
the question on finite fields, we show
an equivalence between lower bounds on uniform $\VNP$ and
standard problems in Boolean and algebraic complexity.

The paper is organised as follows. Definitions, in particular of the uniform versions of Valiant's classes, are given in Section~\ref{sec:prelim}.
Hard families of polynomials with easy to compute coefficients, or that are easy to evaluate, are built in Section~\ref{sec:schnorr}.
Finally, conditional lower bounds on uniform $\VNP$
are presented in the last section.

\section{Preliminaries}\label{sec:prelim}

\subsubsection*{Arithmetic circuits}

An arithmetic circuit over a field $K$ is a directed acyclic graph
whose vertices have indegree 0 or 2 and where a single vertex (called the
output) has outdegree 0. Vertices of indegree 0 are called inputs and are
labelled either by a variable $x_i$ or by a constant $\alpha\in K$. Vertices
of indegree 2 are called gates and are labelled by $+$ or $\times$.

The polynomial computed by a vertex is defined recursively as follows: the
polynomial computed by an input is its label; a $+$ gate (resp. $\times$
gate), having incoming edges from vertices computing the polynomials $f$ and
$g$, computes the polynomial $f+g$ (resp. $fg$). The polynomial computed by a
circuit is the polynomial computed by its output gate.

A circuit is called \emph{constant-free} if the only constant
appearing at the inputs is $-1$. The \emph{formal degree} of a circuit is
defined by induction in the following way: the formal degree of a leaf is $1$,
and the formal degree of a sum (resp. product) is the maximum (resp. sum) of the formal degree of the incoming subtrees (thus constants ``count as variables'' and there is no possibility of cancellation).

We are interested in sequences of arithmetic circuits $(C_n)_{n\in\IN}$,
computing sequences of polynomials $(P_n)_{n\in\IN}$ (we shall usually drop
the subscript ``$n\in\IN$'').

\begin{defi}
Let $K$ be a field. If $s:\IN\to\IN$ is a function, a family $(P_n)$ of
polynomials over $K$ is in $\asize_K(s(n))$ if it is computed by a family of
arithmetic circuits of size $O(s(n))$ over $K$.

Similarly, $\size(s(n))$ denotes the set of (Boolean) languages
decided by Boolean circuits of size $O(s(n))$.
\end{defi}

\subsubsection*{Counting classes}

A function $f:\{0,1\}^\star\to\IN$ is in $\sharpP$ if there exists a
    polynomial $p(n)$ and a language $A\in\P$ such that for all
    $x\in\{0,1\}^\star$
    $$f(x)=|\{y\in\{0,1\}^{p(|x|)},\ (x,y)\in A\}|.$$
A function $g:\{0,1\}^\star\to\IZ$ is in $\GapP$ if there exist two
    functions $f,f'\in\sharpP$ such that $g=f-f'$.
The class $\CeP$ is the set of languages $A=\{x,\ g(x)=0\}$ for some
    function $g\in\GapP$.
The class $\parityP$ is the set of languages $A=\{x,\ f(x)\mbox{ is odd}\}$
for some
    function $f\in\sharpP$.
We refer the reader to~\cite{HemaspaandraO02} for more details on counting classes.

\subsubsection*{Valiant's classes and their uniform counterpart}

Let us first recall the usual definition of Valiant's classes.
\begin{defi}[Valiant's classes] Let $K$ be a field.
A family $(P_n)$ of polynomials over $K$ is in the class $\VP_K$ if
    the degree of $P_n$ is polynomial in $n$ and $(P_n)$ is computed by a
    family $(C_n)$ of \emph{polynomial-size} arithmetic circuits over $K$.

A family $(Q_n(x))$ of polynomials over $K$ is in the class
    $\VNP_K$ if there exists a family $(P_n(x,y))\in \VP_K$ such that
    $$Q_n(x)=\sum_{y\in\{0,1\}^{|y|}} P_n(x,y).$$
\end{defi}

The size of $x$ and $y$ is limited by the circuits for $P_n$ and
is therefore polynomial.
Note that the only difference between $\VP_K$ and $\asize_K(\poly)$ is the constraint on
the degree of $P_n$.
If the underlying field $K$ is clear, we shall drop the subscript ``$K$''
  and speak only of $\VP$ and $\VNP$.
 Based on these usual definitions, we now define uniform versions of Valiant's classes.

\begin{defi}[Uniform Valiant's classes]
  Let $K$ be a field. A family of circuits $(C_n)$ is called uniform if the
  (usual, Boolean) encoding of $C_n$ can be computed in time $n^{O(1)}$. A
  family of polynomials $(P_n)$ over $K$ is in the class $\unifVP_K$ if it is
  computed by a uniform family of constant-free arithmetic circuits of
  polynomial formal degree.

  A family of polynomials $(Q_n(x))$ over $K$ is in the class
  $\unifVNP_K$ if $Q_n$ has $n$ variables $x=x_1,\dots,x_n$ and there exists a
  family $(P_n(x,y)) \in \unifVP_K$ such that
    $$Q_n(x)=\sum_{y \in\{0,1\}^{|y|}} P_n(x,y).$$
\end{defi}

The uniformity condition implies that the size of the circuit
$C_n$ in the definition of $\unifVP$ is polynomial in $n$.
Note that $\unifVP_K$ and $\unifVNP_K$ only depend on the characteristic of the field $K$ (indeed, since no constant from $K$ is allowed in the circuits, these  classes are equal to the ones defined over the prime subfield of $K$).

  In the definition of $\unifVNP$, we have chosen to impose that $Q_n$ has $n$
  variables because this enables us to give a very succinct and clear
  statement of our questions. This is \emph{not} what is done in the usual non-uniform
  definition where the number of variables is only limited by the (polynomial) size of the
  circuit.

The well-known ``Valiant's criterion'' is easily adapted to the uniform case
in order to obtain the following alternative characterisation of
$\unifVNP$.

\begin{prop}[Valiant's criterion]\label{prop:valiant-crit}
In characteristic zero,
a family $(P_n)$ is in $\unifVNP$ iff $P_n$ has $n$ variables, a polynomial
degree and its coefficients are computable in $\GapP$; that is, the function
mapping $(c_1,\ldots,c_n)$ to the coefficient of $X_1^{c_1}\cdots X_n^{c_n}$
in $P_n$ is in $\GapP$.

The same holds in characteristic $p>0$ with coefficients in ``$\GapP \mod p$''\footnote{This is equivalent to the fact that for all $v \in \IFp$, the set of monomials having coefficient $v$ is in $\ModpP$.}.
\end{prop}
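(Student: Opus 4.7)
The plan is to adapt Valiant's classical criterion to the uniform setting, tracking uniformity through each step. Both directions parallel the well-known non-uniform constructions; since the $n$-variable and polynomial-degree conditions are built into $\unifVNP$, I concentrate on the coefficient condition.

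For the forward direction ($\unifVNP$ implies $\GapP$ coefficients), I write $P_n(x) = \sum_{y \in \{0,1\}^{m(n)}} Q_n(x,y)$ with $(Q_n) \in \unifVP$, so the coefficient of $X^c$ in $P_n$ equals $\sum_y \mathrm{coeff}_{X^c}(Q_n(\cdot, y))$. The core task is to show that $(y, c) \mapsto \mathrm{coeff}_{X^c}(Q_n(\cdot, y))$ lies in $\GapP$; closure of $\GapP$ under polynomial-length summation then handles the outer sum over $y$. For this I rely on Valiant's parse-tree interpretation: in a constant-free circuit of polynomial formal degree, each monomial of the output polynomial is produced by a collection of ``parse subtrees'' of polynomial size, and its coefficient equals a signed count of such subtrees. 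Uniformity of the circuit family together with polynomial size and polynomial formal degree make this count a $\GapP$ function of $(n, y, c)$.

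For the backward direction ($\GapP$ coefficients imply $\unifVNP$), I decompose the coefficient function as $g(n,c) = f_1(n,c) - f_2(n,c)$ with $f_i \in \sharpP$, say $f_i(n,c) = |\{z \in \{0,1\}^{q(n)} : A_i(n,c,z)\}|$ for some $A_i \in \P$, giving
$$P_n(x) = \sum_{c, z} [A_1(n, c, z)]\, X^c - \sum_{c, z} [A_2(n, c, z)]\, X^c.$$
Let $d = d(n)$ denote a polynomial bound on the individual degree of $P_n$ in each variable. I encode each $c_i \in \{0, \ldots, d\}$ in binary via bits $b_{i,0}, \ldots, b_{i,\ell}$ with $\ell = \lceil \log(d+1) \rceil$ and build the monomial gadget
$$M(x, b) = \prod_{i=1}^n \prod_{j=0}^{\ell} \bigl((1 - b_{i,j}) + b_{i,j}\, x_i^{2^j}\bigr),$$
with each $x_i^{2^j}$ obtained by $j$ successive squarings; this is a uniform constant-free circuit of polynomial size and polynomial formal degree that evaluates to $X^c$ whenever $b$ encodes $c$. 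I then arithmetize each predicate $A_i$ via Valiant's device: the polynomial-time TM for $A_i$ is simulated by a uniform polynomial-size Boolean circuit, which is converted into a $\unifVP$ polynomial $\Phi_i(b, z, u)$ in auxiliary summation variables $u$ such that $\sum_{u} \Phi_i(b, z, u) = [A_i(n, c, z)]$ whenever $b, z$ are Boolean and $b$ encodes a valid $c$. Combining these pieces with a sign bit $s \in \{0,1\}$ yields
$$P_n(x) = \sum_{s, b, z, u} (-1)^s\, \Phi_s(b, z, u)\, M(x, b),$$
exhibiting $(P_n)$ as a member of $\unifVNP$. The positive-characteristic half of the proposition follows by the same argument with $\sharpP$ and $\GapP$ replaced by their mod~$p$ analogues, using that reduction mod~$p$ commutes with every step.

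The main obstacle will be the arithmetization step: a naive gate-by-gate translation of $a \wedge b \mapsto ab$, $a \vee b \mapsto a + b - ab$, $\neg a \mapsto 1 - a$, etc., blows up the formal degree exponentially with the depth of the Boolean circuit. The fix I rely on is Valiant's trick of introducing, for each intermediate gate, a fresh Boolean summation variable that is forced (via additional factors in $\Phi_i$) to take the correct gate value; this keeps the formal degree polynomial at the cost of extra variables, which are absorbed into the outer $\unifVNP$ summation. Checking uniformity of the construction reduces to uniformity of the Boolean circuit family for $A_i$, which is immediate from $A_i \in \P$.
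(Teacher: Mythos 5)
Your proof is correct, and it fills in the standard argument that the paper leaves implicit (the paper merely asserts that Valiant's criterion "is easily adapted to the uniform case" without writing out the proof). Both directions track uniformity through the classical constructions exactly as one would expect: the forward direction via signed parse-tree counting for constant-free circuits of polynomial formal degree, and the backward direction via binary-encoded exponent vectors, repeated squaring for the monomial gadget, and the gate-variable arithmetization that keeps the formal degree polynomial.
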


Over a field $K$, a polynomial $P(x_1,\dots,x_n)$ is said to be a projection
 of a polynomial $Q(y_1,\dots,y_m)$ if $P(x_1,\dots,x_n)=Q(a_1,\dots,a_m)$
 for some choice of $a_1,\dots,a_m\in\{x_1,\dots,x_n\}\cup K$. A family $(P_n)$ reduces to $(Q_n)$ (via projections) if $P_n$ is a projection of $Q_{q(n)}$ for some polynomially bounded function $q$.

The Hamiltonian Circuit polynomials are defined by
$$\HC_n(x_{1,1},\dots,x_{n,n})=\sum_\sigma\prod_{i=1}^n x_{i,\sigma(i)},$$
where the sum is on all cycles $\sigma\in S_n$ (i.e. on all
the Hamiltonian cycles of the complete graph over $\{1,\ldots,n\}$).
The family $(\HC_n)$
is known to be $\VNP$-complete over any field~\cite{Valiant79} (for
projections). 

\subsubsection*{Elimination of complex constants in circuits}

The weight of a polynomial $P \in \IC[X_1,\ldots,X_n]$ is the sum of the absolute values of its coefficients.
We denote it by $\weight (P)$. It is well known that $\weight$ is a norm of algebra, that is: for $P, Q \in \IC[X_1,\ldots,X_n]$ and $\alpha \in \IC$,
it holds that $\weight (PQ) \leqslant \weight (P)\weight (Q)$,
$\weight (P+Q) \leqslant \weight (P)+\weight (Q)$ and $\weight (\alpha P) = |\alpha| \weight (P)$.

The following result gives a bound on the weight of a polynomial computed by a circuit.

\begin{lem}\label{lem:weight}
Let $P$ be a polynomial computed by an arithmetic circuit of size $s$ and formal degree $d$ with constants of absolute value bounded by $M \geqslant 2$, then $\weight (P) \leqslant M^{s \cdot d}$.
\end{lem}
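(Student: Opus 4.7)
The plan is to prove the bound by structural induction on the circuit, establishing the slightly more refined statement: for every gate $v$ of the circuit, if $s_v$ denotes the size of the sub-circuit rooted at $v$ and $d_v$ its formal degree, then the polynomial $P_v$ computed at $v$ satisfies $\weight(P_v) \leq M^{s_v \cdot d_v}$. Applying this to the output gate gives exactly the claim.

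For the base case, an input gate $v$ has $s_v = d_v = 1$; the polynomial it computes is either a variable $x_i$ (of weight $1 \leq M$) or a constant of absolute value $\leq M$. In both situations $\weight(P_v) \leq M = M^{s_v \cdot d_v}$. For the inductive step, consider a gate $v$ with children $u_1, u_2$ and write $a_i = s_{u_i}$, $b_i = d_{u_i}$, so that $s_v = a_1 + a_2 + 1$. If $v$ is a $+$-gate then $d_v = \max(b_1, b_2)$, and by sub-additivity of $\weight$ together with the inductive hypothesis
\[ \weight(P_v) \leq M^{a_1 b_1} + M^{a_2 b_2} \leq 2\, M^{(a_1+a_2) d_v} \leq M^{1 + (a_1+a_2) d_v} \leq M^{(a_1+a_2+1) d_v}, \]
where the last inequality uses $d_v \geq 1$ and $M \geq 2$. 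If $v$ is a $\times$-gate then $d_v = b_1 + b_2$, and by sub-multiplicativity
\[ \weight(P_v) \leq M^{a_1 b_1} \cdot M^{a_2 b_2} = M^{a_1 b_1 + a_2 b_2} \leq M^{(a_1+a_2+1)(b_1+b_2)}, \]
since expanding the exponent on the right yields $a_1 b_1 + a_2 b_2$ plus the non-negative terms $a_1 b_2 + a_2 b_1 + b_1 + b_2$.

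The one subtlety worth watching, which is the only place the hypothesis $M \geq 2$ is used, is the absorption of the factor $2$ into a factor of $M$ in the addition step; this is also where the formal degree being at least $1$ matters so as to obtain an extra $d_v$ in the exponent. Otherwise the argument is a routine application of the sub-additive and sub-multiplicative properties of the norm $\weight$ already recorded in the text, together with the definitions of size and formal degree for $+$ and $\times$ gates.
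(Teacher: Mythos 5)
Your proof has a genuine gap: the identity $s_v = a_1 + a_2 + 1$ is only valid for \emph{formulas}, i.e.\ circuits whose underlying graph is a tree. The lemma, however, concerns arithmetic circuits, which in this paper are directed acyclic graphs; for a DAG the subcircuits rooted at the two children of $v$ may share gates, so in general one only has $s_v \leq a_1 + a_2 + 1$, with strict inequality whenever sharing occurs. This inequality points the wrong way for your conclusion: your chain of estimates ends with $\weight(P_v) \leq M^{(a_1+a_2+1)d_v}$, and since $a_1 + a_2 + 1 \geq s_v$ this does \emph{not} yield the desired bound $\weight(P_v) \leq M^{s_v d_v}$. (Concretely, in a chain of squaring gates $g_{j+1} = g_j \times g_j$, one has $a_1 = a_2 = s_v - 1$, so $a_1+a_2+1 = 2s_v - 1$.)

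The fix is small but changes which inequality you invoke. Instead of decomposing $s_v$ additively, use the bound $a_i \leq s_v - 1$ for each child $u_i$ (the subcircuit reachable backwards from $u_i$ omits at least the gate $v$). For a $+$-gate, combined with $b_i \leq d_v$, this gives $M^{a_i b_i} \leq M^{(s_v-1)d_v}$ and hence $\weight(P_v) \leq 2\,M^{(s_v-1)d_v} \leq M^{(s_v-1)d_v+1} \leq M^{s_v d_v}$, using $M \geq 2$ and $d_v \geq 1$ exactly as you noted. For a $\times$-gate with $d_v = b_1+b_2$, it gives $\weight(P_v) \leq M^{a_1 b_1 + a_2 b_2} \leq M^{(s_v-1)(b_1+b_2)} = M^{(s_v-1)d_v} \leq M^{s_v d_v}$. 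This is precisely the route the paper takes; the rest of your write-up (the base case, absorbing the factor $2$ into a factor of $M$, and the use of sub-additivity and sub-multiplicativity of $\weight$) is fine as is.
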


\begin{proof}
  We prove it by induction on the structure of the circuit $C$ which computes
  $P$. The inequality is clear if the output of $C$ is a constant or a
  variable since $\weight (P) \leqslant M$, $s \geqslant 1$ and $d \geqslant
  1$ in this case. If the output of $P$ is a $+$ gate then $P$ is the sum of
  the value of two polynomials $P_1$ and $P_2$ calculated by subcircuits of
  $C$ of formal degree at most $d$ and size at most $s-1$.
  By induction hypothesis, we have $\weight (P_1) \leqslant
  M^{d(s-1)}$ and $\weight (P_1) \leqslant M^{d(s-1)}$. We have $\weight (P)
  \leqslant \weight (P_1) + \weight (P_2)$ so $\weight (P) \leqslant 2 \cdot
  M^{d(s-1)} \leqslant M^{d(s-1)+1} \leqslant M^{ds}$. If the output of $C$ in
  a $\times$ gate, $P$ is the product some polynomials $P_1$ and $P_2$ each
  calculated by circuits of size at most $s-1$ and degrees $d_1$ and $d_2$
  respectively such that $d_1 + d_2 = d$. Then
  $\weight (P) \leqslant \weight (P_1)\weight (P_2) \leqslant
  M^{(s-1)d_1}M^{(s-1)d_2} = M^{(s-1)d} \leqslant M^{sd}$.
\end{proof}

For $a \in \IN$, we denote by $\pi (a)$ the number of prime numbers smaller than or equal to $a$.
For a system $S$ of polynomial equations with integer coefficients, we denote by $\pi_S(a)$
the number of prime numbers $p \leqslant a$ such that $S$ has a solution over
$\mathbb{F}_p$. The following lemma will be useful for eliminating constants
from $\IC$. (Note that the similar but weaker statement first shown by Koiran~\cite{Koiran96} as a step in his proof of Theorem~\ref{thm:koiran} would be enough for our purpose.)

\begin{lem}[B\"urgisser~{\cite[p.~64]{Burgisser00}}]\label{lem:burgisser}
  Let $S$ be a system of polynomial equations $$P_1(x)=0, \dots, P_m(x)=0$$ with coefficients in $\IZ$ and with the following parameters : $n$
  unknowns, and for all $i$, degree of $P_i$ at most $d$ and $\weight(P_i)\leqslant w$.  If the
  system $S$ has a solution over $\IC$ then under GRH,
  $$\pi_S(a) \geqslant \frac{\pi(a)}{d^{O(n)}} - \sqrt{a} \log(w a).$$
\end{lem}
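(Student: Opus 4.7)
The idea is to combine an effective zero-dimensional reduction via Bezout, which produces an algebraic point on the variety $V(S)$ whose field of definition has controlled degree and discriminant, with an effective Chebotarev density theorem under GRH counting primes of degree one in that field.

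First I would reduce to the zero-dimensional case. If $V(S) \subset \IC^n$ is non-empty but of positive dimension, intersect it with $n-\dim V(S)$ generic affine hyperplanes with small integer coefficients; a standard argument shows this can be done while preserving non-emptiness. The resulting system still contains all the $P_i$, has degree $\leq d$, weight polynomially bounded in $w$, and its variety is finite of cardinality at most $d^n$ by Bezout. So I may assume $V(S)$ is finite. Each coordinate of each point then lies in a number field $K$ of degree $[K:\IQ] \leq d^{O(n)}$, and standard effective estimates (Mahler measure bounds applied to resultants or to the $u$-resultant of the system) give a bound $\log |\mathrm{disc}(K)| = \mathrm{poly}(n,\log d,\log w)$.

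Next, pick any solution $\alpha \in K^n$ and consider its Galois closure $L/\IQ$. For any rational prime $p$ that (i) is unramified in $L$, (ii) splits completely in $L$, and (iii) does not divide the denominators of the coordinates of $\alpha$, the reduction $\mathcal{O}_L \to \mathcal{O}_L/\mathfrak{p} \cong \IF_p$ sends $\alpha$ to an $\IF_p$-solution of $S$. Only $O(\log|\mathrm{disc}(L)| + n\log w)$ primes are excluded by (i) and (iii), which is absorbed into the claimed error term.

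Finally I would invoke the effective Chebotarev density theorem under GRH applied to $L/\IQ$: the number of primes $p \leq a$ splitting completely in $L$ is
$$\frac{\pi(a)}{[L:\IQ]} + O\!\left(\sqrt{a}\,\log\bigl(|\mathrm{disc}(L)|\cdot a^{[L:\IQ]}\bigr)\right).$$
Since $[L:\IQ] \leq d^{O(n)}$ and $\log|\mathrm{disc}(L)|$ is polynomial in $n,\log d,\log w$, the main term is at least $\pi(a)/d^{O(n)}$ and the error is absorbed into $\sqrt{a}\,\log(wa)$, giving the stated inequality.

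The step I expect to be the main obstacle is tracking the precise polynomial dependence in the height/discriminant estimate for $K$ (and hence $L$) in terms of $w$ and $d$, tight enough that the Chebotarev error matches $\sqrt{a}\log(wa)$ rather than a larger quantity. This relies on careful application of Mahler's inequality and an effective primitive element theorem, which is the bookkeeping that Bürgisser's reference handles explicitly.
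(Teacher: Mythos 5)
The high-level outline---reduce to a zero-dimensional system, locate a solution in a number field $K$ with $[K:\mathbb{Q}] \leq d^{O(n)}$ by B\'ezout, and count suitable primes under GRH---does match B\"urgisser's argument (which follows Koiran). But the step where you pass to the Galois closure $L$ of $K$ and count primes splitting completely in $L$ is a genuine error, and it is fatal rather than a matter of bookkeeping. You claim $[L:\mathbb{Q}] \leq d^{O(n)}$; this is false. The Galois closure of a degree-$D$ number field can have degree as large as $D!$, so here $[L:\mathbb{Q}]$ can be of order $(d^n)!$. The density of primes splitting completely in $L$ is exactly $1/[L:\mathbb{Q}]$, which is then far smaller than the required $d^{-O(n)}$, and your stated error term $\sqrt{a}\,\log\bigl(|\mathrm{disc}(L)|\cdot a^{[L:\mathbb{Q}]}\bigr)$ blows up for the same reason.

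The repair is that complete splitting in $L$ is far more than you need. It suffices that $p$ have a \emph{single} degree-one prime ideal $\mathfrak{p}$ of $\mathcal{O}_K$ above it (unramified and not dividing the denominators of $\alpha$); then $\mathcal{O}_K/\mathfrak{p} \cong \IFp$ and the reduction of $\alpha$ gives an $\IFp$-point of $S$. Via Chebotarev this corresponds to the set of Frobenius classes in $\mathrm{Gal}(L/\mathbb{Q})$ fixing at least one coset of $\mathrm{Gal}(L/K)$, which has density at least $1/[K:\mathbb{Q}] \geq d^{-O(n)}$---so the $L$-degree never enters the main term. In practice Koiran and B\"urgisser bypass the Galois closure entirely: elimination theory produces a univariate $f \in \IZ[t]$ of degree $d^{O(n)}$ and controlled height such that (for all but a few bad primes) a root of $f$ mod $p$ yields a solution of $S$ over $\IFp$, and one applies the GRH-effective prime ideal theorem directly to $\mathbb{Q}[t]/(g)$ for an irreducible factor $g$ of $f$. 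The degree governing both the main term and the error is then $\deg g \leq d^{O(n)}$, not $[L:\mathbb{Q}]$. Your concern about the discriminant and height accounting is legitimate, but the choice of field and of conjugacy condition in the Chebotarev application is the first thing you need to get right.
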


At last, we need a consequence of $\VNP$ having small arithmetic circuits over the complex field.
\begin{lem}\label{lem:collapse}
Assume GRH. If $\VP = \VNP$ over $\IC$, then $\CH=\MA$.
\end{lem}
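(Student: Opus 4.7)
The plan is to combine the $\VNP$-completeness of the permanent with the hypothesis $\VP = \VNP$ over $\IC$ so as to capture $\sharpP$-computations by polynomial-size arithmetic circuits over $\IC$, and then to turn such a circuit into an $\MA$-protocol by eliminating its complex constants via GRH, using Lemma~\ref{lem:burgisser}. Since $\MA \subseteq \PH \subseteq \CH$ is standard, it is enough to establish $\CH \subseteq \MA$.

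From the hypothesis applied to $\per_n \in \VNP_{\IC}$, we obtain arithmetic circuits $C_1,\dots,C_n$ of size $\poly(n)$ over $\IC$ computing $\per_1,\dots,\per_n$. By Valiant's theorem, every $f \in \sharpP$ can be written as $f(x)=\per(A_x)$ for a $0/1$ matrix $A_x$ of polynomial size computable in polynomial time from $x$, so it suffices to compute the permanent in $\MA$. The protocol goes as follows: Merlin sends Arthur a prime $p$ of polynomial bit-length together with circuits $C_1,\dots,C_n$ in which each complex constant has been replaced by an element of $\IFp$. The existence of such a $p$ with consistent integer substitutes is ensured, under GRH, by Lemma~\ref{lem:burgisser} applied to the system of polynomial equations in the circuits' constants expressing ``$C_k$ computes $\per_k$'' (the weight parameter being controlled via Lemma~\ref{lem:weight}). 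Arthur then uses his random coins to run polynomial identity testing over $\IFp$ on the self-reducibility relations $C_k(M)\equiv \sum_j m_{1j}\, C_{k-1}(M^{1j}) \pmod p$ for each $k$, together with the trivial base case $C_1(m)=m$. Once the chain is certified, he evaluates $C_n$ on $A_x$ modulo $p$ to get $f(x)\bmod p$; bundling polynomially many primes inside a single Merlin round and applying Chinese remaindering reconstructs $f(x)$ itself.

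This places $\sharpP$, and a fortiori $\PP$, inside $\MA$. To propagate the collapse to the whole counting hierarchy one iterates the argument along the levels $\CH_k=\PP^{\CH_{k-1}}$: polynomially many non-adaptive $\MA$-queries can be merged into a single round (Merlin supplies all witnesses at once, Arthur verifies each independently using his randomness), so that $\PP^{\MA}\subseteq\MA$ propagates through each level. The main obstacle is soundness: Arthur cannot compute the permanent unaided, so a cheating Merlin could a priori pass undetected. The self-reducibility of the permanent is exactly what rescues us, reducing correctness to a chain of polynomial identities verifiable by PIT; GRH, through Lemma~\ref{lem:burgisser}, guarantees the existence of a polynomially small prime $p$ on which the modular reduction of the complex circuit is globally consistent, and the technical heart of the proof is controlling the parameters of that lemma so a single prime of polynomial bit-length suffices.
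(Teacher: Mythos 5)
Your overall strategy is genuinely different from the paper's. The paper proves the lemma in two lines: it invokes B\"urgisser's work on Boolean parts of Valiant's classes (under GRH) to get $\P/\poly = \PP/\poly = \CH/\poly$ from $\VP=\VNP$ over $\IC$, and then cites the LFKN-style Karp--Lipton collapse to conclude $\CH = \MA$. You instead try to build the $\MA$ protocol from scratch: Merlin sends a skeleton of a small circuit for the permanent together with $\IFp$-constants for a polynomially long prime $p$, Lemma~\ref{lem:burgisser} (via Lemma~\ref{lem:weight}) guarantees such $p$ exists under GRH, and Arthur checks the expansion-by-minors identities by PIT before evaluating and reconstructing by CRT. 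That core, giving $\P^{\sharpP}\subseteq\MA$ (hence $\PP\subseteq\MA$), is correct in spirit and is essentially an inline reproof of the ingredients that the paper cites.

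The genuine gap is in the final step, the collapse of the entire counting hierarchy. You claim that ``polynomially many non-adaptive $\MA$-queries can be merged into a single round, so that $\PP^{\MA}\subseteq\MA$ propagates through each level.'' This does not work: a $\PP$ base machine is a probabilistic machine whose \emph{different computation paths} can issue \emph{different} oracle queries, so across all $2^{\mathrm{poly}(n)}$ branches there may be exponentially many distinct queries. Merlin cannot supply witnesses for all of them, and, more fundamentally, Arthur is a polynomial-time verifier and cannot simulate the $\PP$ base machine even if he were handed a perfect oracle. Your own protocol side-steps this at the first level only because Arthur does not simulate a $\PP$ machine there; he merely evaluates a circuit after verifying it. The correct route from $\PP\subseteq\P/\poly$ to $\CH=\MA$ is more delicate: one first derives $\CH\subseteq\P/\poly$ by repeatedly absorbing $\PP$ oracles into advice, and then uses the interactive-proof structure for $\PSPACE$/$\CH$ (with a prover computable in $\P/\poly$, supplied by Merlin up front) so that soundness comes for free from the interactive proof and not from any attempt by Arthur to verify a tower of oracle circuits. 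This is precisely the content of the two citations in the paper's proof, and it cannot be replaced by the query-merging argument you sketch. As written, the proposal proves $\PP\subseteq\MA$ under the hypothesis, but not $\CH\subseteq\MA$.
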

\begin{proof}
Assume $\VP = \VNP$ over $\IC$. From the work on Boolean parts of Valiant's classes~\cite[Chapter 4]{Burgisser00}, this implies
$\P / \poly = \PP / \poly = \CH / \poly$, therefore $\MA = \CH$~\cite{LundFKN90}.
\end{proof}

\section{Hard polynomials with coefficients in $\MA$}\label{sec:schnorr}

We begin with lower bounds on polynomials with coefficients in $\PH$ before
bringing them down to $\MA$.

\subsubsection*{Hard polynomials with coefficients in $\PH$}

We first need to recall a couple of results. The first one 
is an upper bound on the complexity of the following problem called $\HN$:
\begin{description}
\item[\it Input] A system $S=\{P_1=0,\dots,P_m=0\}$ of $n$-variate
    polynomial equations with integer coefficients, each polynomial
    $P_i\in\IZ[x_1,\dots,x_n]$ being given as a constant-free arithmetic
    circuit.
\item[\it Question] Does the system $S$ have a solution over $\IC^n$?
\end{description}

\begin{thm}[Koiran~\cite{Koiran96}]\label{thm:koiran}
  Assuming GRH is true, $\HN\in\PH$.
\end{thm}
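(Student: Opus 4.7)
The plan is to reduce $\HN$ to an approximate counting question on primes. Given an instance $S$, where the $P_i$ are supplied as constant-free circuits of size $s$, Lemma~\ref{lem:weight} bounds the formal degree and the weight of each $P_i$ by $d:=2^s$ and $w:=2^{sd}$ respectively. I will pick a threshold $A$ that is exponential in $s$, $n$ and $\log w$, specifically large enough that (i) the error term $\sqrt{A}\log(wA)$ in Lemma~\ref{lem:burgisser} is dominated by $\pi(A)/(2d^{O(n)})$, and (ii) $\pi(A)/(2d^{O(n)})$ exceeds $2d^{O(n)}\log w$ by at least a constant factor. Since $\log A$ remains polynomial in the input size, primes $p\leqslant A$ have polynomial bit-length and can be manipulated by an $\NP$ machine.

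In the ``yes'' case, Lemma~\ref{lem:burgisser} combined with GRH yields $\pi_S(A)\geqslant T_{\text{yes}}:=\pi(A)/(2d^{O(n)})$. In the ``no'' case, I invoke the effective Nullstellensatz of Brownawell--Kollár to obtain integer polynomials $Q_1,\ldots,Q_m$ and a positive integer $c_0$ with $c_0=\sum_i Q_iP_i$ and $\log c_0\leqslant d^{O(n)}\log w$. Any prime $p$ not dividing $c_0$ then renders this identity, read modulo~$p$, incompatible with the existence of a solution of $S$ in $\IFp^n$, so $\pi_S(A)\leqslant T_{\text{no}}:=\log_2 c_0$. The parameters are arranged precisely so that $T_{\text{yes}}\geqslant 2T_{\text{no}}$.

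To finish, the predicate $\varphi(p):=[p\text{ is prime},\ p\leqslant A,\ \exists\bar x\in\IFp^n\text{ with }P_i(\bar x)\equiv 0\pmod p\text{ for all }i]$ lies in $\NP$: one guesses $p$ together with a short primality witness and a tuple $\bar x$, then evaluates each circuit modulo~$p$ in polynomial time. By Stockmeyer's approximate counting, the threshold question ``$|\{p:\varphi(p)\}|\geqslant (T_{\text{yes}}+T_{\text{no}})/2$'' lies in $\BPP^{\NP}\subseteq\PH$, and this decides $\HN$. The delicate step is the parameter balancing: one must verify that the Bürgisser lower bound $\pi(A)/d^{O(n)}$ truly beats the Nullstellensatz upper bound $d^{O(n)}\log w$ for a value of $A$ whose logarithm stays polynomial in the input, which is possible thanks to the explicit bounds $d\leqslant 2^s$ and $\log w\leqslant s\cdot 2^s$ coming from Lemma~\ref{lem:weight}.
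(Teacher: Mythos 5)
Your argument reproduces Koiran's original proof, which the paper cites rather than re-derives: a prime-density dichotomy (many primes with solutions if $S$ is satisfiable over $\IC$ by GRH via Lemma~\ref{lem:burgisser}, few primes if not via the arithmetic Nullstellensatz), decided by approximate counting in $\BPP^{\NP}\subseteq\PH$. The one point of divergence from the paper's remark is the treatment of circuit inputs: the paper suggests introducing one fresh variable and one degree-$2$ equation per gate, which keeps $d$ constant and grows $n$ by $s$, whereas you keep $n$ variables and let $d$ and $\log w$ blow up to $2^s$ and $s2^s$. Both reduce to the same parameter check — that $\log A$ can be kept polynomial while $\pi(A)/d^{O(n)}$ dominates both the error term $\sqrt{A}\log(wA)$ and the Nullstellensatz bound $d^{O(n)}\log w$ — and your calculation is right (taking $\log A = \Theta((sn+s)^2)$ suffices), though your phrase ``threshold $A$ exponential in $s$, $n$ and $\log w$'' is misleading since $\log w$ is itself exponential in $s$; what you actually need, and correctly verify at the end, is only that $\log A$ is polynomial in the bit-size of the input. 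The gate-by-gate encoding sidesteps this tension entirely and is the cleaner route, but your version is also correct.
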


Koiran's result is stated here for polynomials given by arithmetic circuits,
instead of the list of their coefficients. Adapting the result of the original
paper in terms of arithmetic circuits is not difficult: it is enough to add
one equation per gate expressing the operation made by the gate, thus
simulating the whole circuit.

The second result is used in the proof of Schnorr's result mentioned in the
introduction.

\begin{lem}[Schnorr~\cite{Schnorr78}]\label{lem:schnorr}
  Let $(U_n)$ be the family of polynomials defined inductively as follows:
  $$\begin{cases}
    U_1=a^{(1)}_0+b^{(1)}_0x \quad \text{where $a^{(1)}_0,b^{(1)}_0$ and $x$
      are new variables}\\
    U_n=\left(\sum_{i=1}^{n-1} a^{(n)}_i
      U_i\right)\left(\sum_{i=1}^{n-1} b^{(n)}_i U_i\right)\quad \text{where
    }a^{(n)}_i,b^{(n)}_i\text{ are new variables.}
  \end{cases}$$
  Thus $U_n$ has variables $x$, $a^{(j)}_i$ and $b^{(j)}_i$ (for $1\leqslant
  j\leq n$ and $0\leqslant i<j$). For simplicity, we will write $U_n(a,b,x)$,
  where the total number of variables in the tuples $a,b$ is $n(n+1)$.

    For every univariate polynomial $P(x)$ over $\IC$ computed by an
    arithmetic circuit of size $s$, there are constants $a,b\in\IC^{s(s+1)}$
    such that $P(x)=U_s(a,b,x)$.
\end{lem}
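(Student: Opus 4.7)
My plan is to prove the lemma by induction on the circuit size $s$, showing that the Schnorr-form polynomial $U_s$ is universal for univariate polynomials computed by arithmetic circuits of size $s$. The key idea is to simulate each gate of the circuit $C$ by one level of the recursive construction of $U_n$: the linear combinations inside a level absorb $+$-gates, while the outer product absorbs $\times$-gates.

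For the base case $s=1$, a size-$1$ circuit consists of a single leaf, so $P$ is either a constant $c\in\IC$ or the variable $x$; both are affine in $x$ and thus representable as $U_1 = a_0^{(1)} + b_0^{(1)} x$ by setting the coefficients appropriately. For the inductive step with $s \geq 2$, examine the output gate of $C$ and its two incoming subcircuits $C_L, C_R$ of sizes $s_L, s_R$ with $s_L + s_R + 1 \leq s$, computing polynomials $P_L, P_R$. I would apply the induction hypothesis to each subcircuit, assigning them disjoint ranges of indices below $s$, so that some $U_j$ (with $j \leq s-1$) equals $P_L$ and some $U_k$ (with $k \leq s-1$) equals $P_R$. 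If the output is a $\times$-gate, set $L_s = U_j$ (take $a_j^{(s)}=1$ and the other $a_i^{(s)}=0$) and $R_s = U_k$ similarly, so that $U_s = U_j\cdot U_k = P_L\cdot P_R = P$. If the output is a $+$-gate, use the identity $P_L+P_R = (P_L+P_R)\cdot 1$: set $L_s = U_j + U_k$ by taking $a_j^{(s)}=a_k^{(s)}=1$ (and other coefficients to $0$), and set $R_s = 1$ using a previously-reserved $U_m$ equal to the constant $1$.

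The main obstacle will be the bookkeeping of index allocation: one must verify that (i) the two subcircuits can be encoded into disjoint ranges of indices without conflicting assignments to shared lower-indexed $U_i$ variables, and (ii) a free slot $U_m=1$ is available whenever a $+$-gate is encountered. This is handled by a mild strengthening of the induction hypothesis that maintains throughout a reserved index carrying the constant $1$ (easily arranged since $U_1 = 1$ is obtained by $a_0^{(1)}=1,\ b_0^{(1)}=0$), together with a preprocessing of $C$ that adds an explicit constant-$1$ leaf at a constant additive cost in size, which is absorbed into the overall budget $s$. Once this bookkeeping is discharged, the concatenation of coefficients gives the required parameters $a, b \in \IC^{s(s+1)}$ with $U_s(a,b,x) = P(x)$.
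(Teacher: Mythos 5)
The paper cites this lemma to Schnorr and gives no proof of its own, so there is nothing in the paper to compare against; I evaluate your argument on its own merits.

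Your proposal has a genuine gap, rooted in the fact that the circuit in the statement is a DAG, not a tree. You examine the output gate and recurse on the two incoming subcircuits $C_L, C_R$, relying on the bound $s_L + s_R + 1 \leq s$. For a DAG this is false: $C_L$ and $C_R$ can share almost all of their vertices (consider $P = f \cdot f$ where $f$ is computed by a subcircuit of size $s-2$, so $s_L = s_R = s-2$ and $s_L + s_R + 1 = 2s-3$). Consequently there are not enough ``slots'' $U_1, \dots, U_{s-1}$ to encode both subcircuits in disjoint ranges, and the induction does not close. A secondary but related problem with the ``disjoint ranges'' idea is that the $U_n$ construction is cumulative — each $U_n$ is built from \emph{all} of $U_1,\dots,U_{n-1}$ and the leaf $U_1 = a_0^{(1)}+b_0^{(1)}x$ — so the right-hand block cannot be analysed as a clean, shifted copy of the construction; in particular a subcircuit whose range starts at some $n_0 > 1$ can only see the variable $x$ if some lower-indexed $U_j$ already equals $x$, which your ``disjoint'' induction hypothesis does not provide.

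The correct and much simpler approach avoids recursion on subcircuits altogether: topologically sort the $s$ vertices $v_1,\dots,v_s$ of $C$ so that the output is $v_s$ and (after merging duplicate $x$-leaves) the unique $x$-leaf, if any, is $v_1$. Then fix the constants level by level so that $U_n$ equals the polynomial computed at $v_n$: for a constant leaf $c$ take both affine forms equal to constants with product $c$; for the $x$-leaf take $U_1 = x$; for $v_n = v_j \times v_k$ take the left affine form to pick out $U_j$ and the right one $U_k$; for $v_n = v_j + v_k$ take the left affine form $U_j + U_k$ and the right one the constant $1$. Sharing of gates is then harmless, since once $U_j$ equals the polynomial at $v_j$, it can be reused by every later $U_n$ that references $v_j$. (One further remark: your ``preprocessing that adds a constant-$1$ leaf'' enlarges the circuit and cannot be ``absorbed'' into the budget, since the claim is exact — a size-$s$ circuit must be captured by $U_s$. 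It is also unnecessary: the affine forms at each level contain a constant term $a_0^{(n)}, b_0^{(n)}$ — the variable count $n(n+1)$ in the statement only matches if these are present — so the constant $1$ is always available without reserving an index.)
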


The polynomials $U_s$ in this lemma are universal in the sense that they
can simulate any circuit of size $s$; the definition of such a polynomial
indeed reproduces the structure of an arbitrary circuit by letting at each
gate the choice of the inputs and of the operation, thanks to new variables.

The third result we'll need is due to Hrube\v{s} and Yehudayoff~\cite{HrubesY11}
and relies on Bézout's Theorem. Showing Theorem~\ref{thm:coef-ph} could also
be done without using algebraic geometry, but this would complicate the
overall proof.

\begin{lem}[Hrube\v{s} and Yehudayoff~\cite{HrubesY11}]\label{lem:hy}
  Let $F:\IC^n\to\IC^m$ be a polynomial map of degree $d>0$, that is,
  $F=(F_1,\dots,F_m)$ where each $F_i$ is a polynomial of degree at most $d$. Then $|F(\IC^n)\cap\{0,1\}^m| \leqslant (2d)^n$.  
\end{lem}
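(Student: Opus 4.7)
The plan is to bound $|F(\IC^n)\cap\{0,1\}^m|$ by the number of irreducible components of the preimage variety $W := F^{-1}(\{0,1\}^m)\subseteq\IC^n$, and then apply Bézout's theorem to $W$. Concretely, I would observe that $W = V(G_1,\ldots,G_m)$, where $G_i(x) = F_i(x)\bigl(F_i(x)-1\bigr)$ has degree at most $2d$, and write $W = \bigsqcup_{a\in\{0,1\}^m} F^{-1}(a)$. Each fibre $F^{-1}(a)$ is \emph{clopen} in $W$: it is closed as the zero set of the polynomials $F_i(x)-a_i$ intersected with $W$, and it is open because its complement in $W$ is the finite union $\bigsqcup_{a'\neq a} F^{-1}(a')$ of closed subsets.

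Next, every irreducible subset of $W$ is Zariski-connected, so it must be contained in a single clopen fibre. In particular, each irreducible component of $W$ lies in a unique fibre $F^{-1}(a)$, and each non-empty fibre contains at least one irreducible component of $W$ (for instance, any component of $F^{-1}(a)$ viewed as a subvariety of $\IC^n$). Therefore
$$|F(\IC^n)\cap\{0,1\}^m|=\#\bigl\{a\in\{0,1\}^m:F^{-1}(a)\neq\emptyset\bigr\}\leq \#\{\text{irreducible components of }W\}.$$

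Finally, I would invoke Bézout's theorem in the form that bounds the sum of degrees of the irreducible components of an affine variety $V(G_1,\ldots,G_m)\subseteq\IC^n$ cut out by polynomials of degree at most $D$ by $D^n$. Since each irreducible component contributes degree at least one, this bounds the number of components by $D^n = (2d)^n$, finishing the proof.

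The main point requiring care is the Bézout-type estimate on the number of components when $m$ may exceed $n$: one cannot just intersect the $G_i$'s one at a time and lose nothing. This estimate is classical (for instance by a Bertini-style reduction to $n$ generic linear combinations of the $G_i$ combined with an intersection-theoretic degree count), but it is where all the algebraic-geometric content lives. The other steps---recognising the clopen decomposition of $W$ and using that an irreducible variety cannot split across two disjoint clopen pieces---are elementary topological observations, and in particular do not require choosing preimages of the $y_j\in F(\IC^n)\cap\{0,1\}^m$ (which is what makes positive-dimensional fibres harmless here).
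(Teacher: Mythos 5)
Your argument is correct and is essentially the standard Bézout-based proof (the paper simply cites Hrubeš and Yehudayoff without reproducing the argument). The structure — passing to $W=V\bigl(F_1(F_1-1),\ldots,F_m(F_m-1)\bigr)$, noting that each fibre over a Boolean point is clopen in $W$ so every irreducible component of $W$ lands in a single fibre, and then invoking the affine Bézout inequality (Heintz) to bound the number of components by $(2d)^n$ — is exactly how this lemma is proved; the only cosmetic variant one sometimes sees is to argue via ``the image of an irreducible set under $F$ is irreducible, hence a single Boolean point,'' which is equivalent to your clopen observation.
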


We are now ready to give our theorem.

\begin{thm}\label{thm:coef-ph}
  Assume GRH is true. For any constant $k$, there is a family $(P_n)$ of
  univariate polynomials with coefficients in $\{0,1\}$ satisfying:
  \begin{itemize}
  \item $\deg(P_n)=n^{O(1)}$ (polynomial degree);
  \item the coefficients of $P_n$ are computable in $\PH$, that is, on input
    $(1^n,i)$ we can decide in $\PH$ if the coefficient of $x^i$ is $1$;
  \item $(P_n)$ is not computed by arithmetic circuits over $\IC$ of size
    $n^k$.
  \end{itemize}
\end{thm}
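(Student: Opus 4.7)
The plan is to adapt Schnorr's classical lower bound, combining the universal polynomial $U_s$ of Lemma~\ref{lem:schnorr} with the Hrube\v{s}--Yehudayoff counting lemma and Koiran's theorem. Pick $s := n^{k+1}$ (which eventually dominates any $O(n^k)$) and $D := c_0 \cdot s^3$ for a large enough absolute constant $c_0$; both are polynomial in $n$. I will define $P_n(x) = \sum_{i=0}^{D} c_i^{(n)} x^i$, where $(c_0^{(n)}, \dots, c_D^{(n)})$ is the lexicographically smallest vector in $\{0,1\}^{D+1}$ not lying in the image of the polynomial map
\[
F_{\le D} : \IC^{s(s+1)} \to \IC^{D+1}, \qquad F_{\le D}(a,b) := \bigl([x^i]\, U_s(a,b,x)\bigr)_{0 \le i \le D}.
\]

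Non-emptiness and hardness both follow from a single counting argument. By Lemma~\ref{lem:schnorr}, any univariate polynomial computed over $\IC$ by a circuit of size $s$ equals $U_s(a,b,x)$ for some $(a,b) \in \IC^{s(s+1)}$, so its vector of coefficients in degrees $0,\dots,D$ lies in $F_{\le D}(\IC^{s(s+1)})$. A direct induction on the recursive definition of $U_s$ yields $\deg_{(a,b)} U_s = O(2^s)$, hence each coordinate of $F_{\le D}$ has degree $O(2^s)$. Lemma~\ref{lem:hy} then gives
\[
\bigl|F_{\le D}(\IC^{s(s+1)}) \cap \{0,1\}^{D+1}\bigr| \le \bigl(2 \cdot O(2^s)\bigr)^{s(s+1)} = 2^{O(s^3)},
\]
which is strictly less than $2^{D+1}$ once $c_0$ is large enough. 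A vector outside the image therefore exists, and the corresponding polynomial has no circuit of size $s$ over $\IC$, so in particular for large $n$ no circuit of size $O(n^k)$ either.

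It remains to show that each bit of $(c_0^{(n)}, \dots, c_D^{(n)})$ can be computed in $\PH$ (assuming GRH). Using a standard lex-prefix search, this reduces to a $\PH$-test for the membership predicate ``$(c_0,\dots,c_D) \in F_{\le D}(\IC^{s(s+1)})$'', which is precisely the $\HN$ instance
\[
\exists\,(a,b) \in \IC^{s(s+1)} : \bigwedge_{i=0}^{D} F_i(a,b) = c_i.
\]
The $i$-th bit of the lex-smallest vector is $0$ iff some Boolean extension $(c_{i+1},\dots,c_D)$ makes the whole vector avoid $F_{\le D}(\IC^{s(s+1)})$, an $\NP$-quantifier over a $\PH$-predicate, so still in $\PH$.

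The main technical obstacle is that $U_s$ has exponential degree in $x$, so the $F_i$ are \emph{a priori} not computable by polynomial-size arithmetic circuits. The fix is a standard truncation/homogenisation: at every gate of $U_s$ I would propagate the $(D{+}1)$-tuple of coefficients of $x^0,\dots,x^D$ of the polynomial computed there, using pointwise addition at $+$-gates and truncated convolution at $\times$-gates. This produces a single polynomial-size (size $O(sD^2)$) circuit over $(a,b)$ whose $D+1$ output wires compute $F_0,\dots,F_D$ simultaneously; feeding the resulting polynomial system into Theorem~\ref{thm:koiran} yields the required $\PH$-membership test and closes the construction.
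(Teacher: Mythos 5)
Your proof is correct and follows essentially the same route as the paper: Schnorr's universal polynomial plus the Hrube\v{s}--Yehudayoff counting bound to guarantee a $\{0,1\}$-vector outside the coefficient image, and Koiran's $\HN\in\PH$ (under GRH) together with a lex-first search to recover its bits in $\PH$. The only cosmetic difference is in how you encode the $\HN$ instance: you write one equation per coefficient $F_i(a,b)=c_i$ for $0\le i\le D$ (building the $F_i$ by truncated convolution), while the paper instead evaluates the truncation $U_{s|d}$ at the $d+1$ integer points $0,\dots,d$ and compares with $P_\gamma(m)$; both yield a polynomial-size constant-free circuit family, and both are fed to Theorem~\ref{thm:koiran} in the same way.
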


\begin{proof}
  Fix $s=n^k$. Consider the universal polynomial $U_s(a,b,x)$ of
  Lemma~\ref{lem:schnorr} simulating circuits of size $s$. If $\alpha_i^{(s)}$
  denotes the coefficient of $x^i$ in $U_s$, then we have the relation
  $$\alpha_i^{(s)}=\sum_{i_1+i_2=i \atop s_1,s_2<s}
  a_{s_1}^{(s)}b_{s_2}^{(s)}\alpha_{i_1}^{(s_1)}\alpha_{i_2}^{(s_2)}.$$
  By induction, the coefficient $\alpha_i^{(s)}$ is therefore a polynomial in
  $a,b$ of degree $\leqslant (i+1)2^{2s}$.

  Now, we would like to find a polynomial whose coefficients are different
  from the $\alpha_i^{(s)}$ for any value of $a,b$. This will be done thanks
  to Lemma~\ref{lem:hy}, but we have to use it in a clever way because our
  method requires to use interpolation on $d+1$ points to identify two
  polynomials of degree $d$: hence we need to ``truncate'' the polynomial
  $U_s$ to degree $d$.

  Fix $d=s^4$. It follows from the beginning of the proof that the map
  computing the first $(d+1)$ coefficients of $U_s$
  $$\begin{array}{llcl}
    F: & \IC^{s(s+1)} & \to & \IC^{d+1}\\
    & (a,b) & \mapsto & (\alpha_0^{(s)},\dots,\alpha_d^{(s)})
  \end{array}$$
  is a polynomial map of degree at most $(d+1)2^{2s}$. Since
  $((d+1)2^{2s})^{s(s+1)}<2^{d+1}$, by Lemma~\ref{lem:hy} there exist
  coefficients $(\beta_0,\dots,\beta_d)\in\{0,1\}^{d+1}$ not in
  $F(\IC^{s(s+1)})$. In other words, for any values of $a,b$ in
  $\IC$, the first $(d+1)$ coefficients of $U_s$ differ from
  $(\beta_0,\dots,\beta_d)$.

  Let $P_{\beta}(x)$ be the polynomial $\sum_{i=0}^d\beta_ix^i$ and let us
  call $U_{s|_d}$ the truncation of $U_s$ up to degree $d$, that is, the sum
  of all the monomials of degree $\leqslant d$ in $x$. For any instantiation
  of $a,b$ in $\IC$, we have $U_{s|_d}(a,b,x)\neq
  P_{\beta}(x)$. Since both polynomials are of degree smaller than or equal to
  $d$, this means that there exists an integer $m\in\{0,\dots,d\}$ such that
  $U_{s|_d}( a, b,m)\neq P_{\beta}(m)$. Therefore the following
  system of polynomial equations with unknowns $ a, b$:
  $$S_{\beta}=\{U_{s|_d}( a, b,m)=P_{\beta}(m)\ :\
  m\in\{0,\dots,d\}\}$$
  has no solution over $\IC$.

  Conversely, consider now this system for other coefficients than
  $\beta$, that is, $S_{\gamma}$ for
  $\gamma_0,\dots,\gamma_d\in\{0,1\}$. If $S_{\gamma}$ does not have a
  solution over $\IC$, this means that for any instantiation of $ a,
  b\in\IC$ we have $U_{s|_d}( a, b,x)\neq P_{\gamma}(x)$, hence
  $P_{\gamma}$ is not computable by a circuit of size $s$ by
  Lemma~\ref{lem:schnorr}.

  The goal now is then to find values of $\gamma\in\{0,1\}^{d+1}$ such that
  $S_{\gamma}$ does not have a solution over $\IC$.

  Remark first that on input $\gamma_0,\dots,\gamma_d\in\{0,1\}$ and
  $m\in\{0,\dots,d\}$, we can describe in polynomial time a circuit
  $C_{\gamma,m}( a, b)$ computing the polynomial $U_{s|_d}(
  a, b,m)-P_{\gamma}(m)$. Indeed, $U_s$ is computable by an easily
  described circuit following its definition, hence its truncation to degree
  $d$ also is, and a circuit for $P_{\gamma}$ is also immediate if we are
  given $\gamma$. Therefore, we can describe in polynomial time the system
  $S_{\gamma}$ to be used in Theorem~\ref{thm:koiran}.

  The algorithm in $\PH$ to compute the coefficients of a polynomial
  $P_{\beta}$ without circuits of size $s$ is then the following on input
  $(1^n,i)$:
  \begin{itemize}
  \item Find the lexicographically first $\gamma_0,\dots,\gamma_d\in\{0,1\}$
    such that $S_{\gamma}\not\in\HN$;
  \item accept iff $\gamma_i=1$.
  \end{itemize}
  This algorithm is in $\PH^\HN$. By Theorem~\ref{thm:koiran}, if we assume
  GRH then the problem $\HN$ is in $\PH$. We deduce that computing the
  coefficients of $P_{\gamma}$ can be done in $\PH$.
\end{proof}

\subsubsection*{Hard polynomials with coefficients in $\MA$}

Allowing $n$ variables instead of only one, we can even obtain lower
bounds for polynomials with coefficients in $\MA$.

\begin{cor}\label{cor:coef-ma}
  Assume GRH is true. For any constant $k$, there is a family $(P_n)$ of
  polynomials on $n$ variables, with coefficients in $\{0,1\}$, of degree
  $n^{O(1)}$, with coefficients computable in $\MA$, and such that $(P_n) \not\in \asize_\IC (n^k)$.
\end{cor}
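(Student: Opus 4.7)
The plan is to adapt Theorem~\ref{thm:coef-ph} to $n$-variate multilinear polynomials: with $n$ variables we gain $2^n$ available coefficients, which is much larger than the $s(s+1)=n^{O(k)}$ parameters of the universal polynomial, and this extra room is what should let us drop the coefficient complexity from $\PH$ down to $\MA$. Concretely, set $s=n^k$ and let $U_{n,s}(a,b,x_1,\dots,x_n)$ be the natural multivariate analogue of the universal polynomial of Lemma~\ref{lem:schnorr}, obtained by the same recursion but with $n$ input variables interspersed among the leaves; it still has $O(s^2)$ parameter variables $(a,b)$ and simulates every $n$-variate size-$s$ circuit over~$\IC$. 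For $\beta\in\{0,1\}^{2^n}$ write $P_\beta(x)=\sum_{c\in\{0,1\}^n}\beta_c\prod_i x_i^{c_i}$ and let $S_\beta$ be the polynomial-size system expressing that some $(a,b)\in\IC^{O(s^2)}$ realises $U_{n,s}(a,b,\cdot)\equiv P_\beta(\cdot)$ as multilinear polynomials in $x$. Since the coefficient map $(a,b)\mapsto(U_{n,s}(a,b,c))_{c\in\{0,1\}^n}$ has degree $2^{O(s)}$, Lemma~\ref{lem:hy} bounds its Boolean image by $2^{\poly(n)}\ll 2^{2^n}$, so for almost every $\beta$ the system $S_\beta$ is unsatisfiable over~$\IC$ and, by Lemma~\ref{lem:schnorr}, $P_\beta\notin\asize_\IC(n^k)$.

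The remaining task is to select a hard $\beta^{(n)}$ whose coefficients form an $\MA$-computable function. Following the template of Jansen--Santhanam, I would have Merlin send a short description of $\beta^{(n)}$ (for instance a Boolean circuit $C$ of size $n^\ell$ with $\ell\gg k$, so that $\beta^{(n)}_c=C(c)$; by counting there are $2^{n^\ell}\gg 2^{\poly(n)}$ such circuits, of which most produce unsatisfiable $S_{\beta(C)}$), fix a canonical such $C$ by the lex-first rule, and let Arthur verify that the claimed $C$ is valid by checking the unsatisfiability of the succinctly-described $S_{\beta(C)}$ over~$\IC$. Under GRH, Koiran's theorem puts $\HN\in\AM$, and to make the dual check land inside $\MA$ rather than $\mathsf{coAM}$ one uses Lemma~\ref{lem:burgisser} in the following direction: after Merlin's message, Arthur picks a random small prime $p$ and checks that $S_{\beta(C)}$ has no solution in $\IF_p$; if the system were satisfiable over~$\IC$ then Lemma~\ref{lem:burgisser} guarantees it would be satisfiable over a positive density of primes, so a cheating Merlin could not win against a randomly chosen~$p$. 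The lex-first certification that no $C'<C$ is already hard is folded into Merlin's message as a list of one $\IF_p$-solution per lex-earlier prefix extension, keeping the total message of size $\poly(n)$.

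The main obstacle is precisely to orchestrate these checks so that the whole thing lives inside $\MA$ rather than a higher level of the polynomial hierarchy: ``unsatisfiability over $\IC$'' is naturally a $\mathsf{coAM}$ property, which does not nest cleanly inside an $\MA$ protocol, and the crucial role of GRH is to turn that query into a probabilistic $\IF_p$-test that Arthur can carry out \emph{after} Merlin speaks. A secondary technical point is controlling the size of the ``universal'' multivariate system $S_\beta$; although its informal statement involves $2^n$ equations (one per $c\in\{0,1\}^n$), a Schwartz--Zippel-style replacement of the multilinear identity $U_{n,s}(a,b,\cdot)\equiv P_\beta(\cdot)$ by evaluation at a polynomial-size number of points chosen from a large grid of $\IF_p$ keeps $S_\beta$ of polynomial size, and the same randomness is reused by Arthur to perform Koiran's $\IF_p$-test. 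Once these two steps are in place, chaining them with the counting argument above yields a family $(P_n)$ with the required $\MA$-computable coefficients and $\asize_\IC(n^k)$ lower bound.
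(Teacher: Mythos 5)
Your approach diverges substantially from the paper's, and it has genuine gaps. The paper's proof of this corollary is a short win--win case analysis: either the Hamiltonian family $(\HC_n)$ lacks polynomial-size circuits over $\IC$ -- in which case the restricted family $\HC'_n(x_1,\dots,x_n)=\HC_{\lfloor\sqrt n\rfloor}(x_1,\dots,x_{\lfloor\sqrt n\rfloor^2})$ already has coefficients in $\P\subseteq\MA$ and no size-$n^k$ circuits -- or $(\HC_n)$ does have polynomial-size circuits, in which case Lemma~\ref{lem:collapse} gives $\PH=\MA$ (via $\VP=\VNP$ over $\IC$ and the Boolean-part collapse), so the $\PH$-family of Theorem~\ref{thm:coef-ph} automatically has $\MA$ coefficients. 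No new Merlin--Arthur protocol is constructed; the $\MA$ upper bound comes entirely from this conditional collapse, not from a direct certification.

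Your plan instead tries to build an explicit $\MA$ protocol in which Merlin names a hard coefficient vector $\beta$ and Arthur certifies it. Two steps of this plan do not work as described. First, after Merlin sends $C$, you want Arthur to pick a random small prime $p$ and ``check that $S_{\beta(C)}$ has no solution in $\IF_p$.'' The primes coming from Lemma~\ref{lem:burgisser} have bitlength $\poly(n)$, so $\IF_p$ has exponential size; deciding unsatisfiability of a polynomial system over such a field is not a deterministic (or even $\BPP$) polynomial-time check, and Arthur has no efficient verifier for it. This is precisely why, in the paper, unsatisfiability over $\IC$ is only ever queried inside a $\PH$ oracle call (Theorem~\ref{thm:coef-ph}) or handled with the two-sided Koiran/Stockmeyer prime-counting machinery in an $\AMA$ protocol (Proposition~\ref{prop:eval-am}), never reduced to an $\MA$-checkable predicate. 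Second, the lex-first certification cannot be ``folded into Merlin's message'' at size $\poly(n)$: to witness that $C$ is lex-minimal among $n^\ell$-bit descriptions, one must rule out all $2^{n^\ell}$ lex-earlier candidates, and ``one $\IF_p$-solution per lex-earlier prefix extension'' neither identifies a clear $\poly(n)$-size family of certificates nor supplies a soundness argument against a cheating Merlin. Without these two steps, the protocol does not land in $\MA$, and the proposal does not establish the corollary.
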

\begin{proof}
  If the Hamiltonian family $(\HC_n)$ does not have circuits of polynomial
  size over $\IC$, consider the following variant of a family with $n$
  variables: $\HC'_n(x_1,\dots,x_n)=
  \HC_{\lfloor\sqrt{n}\rfloor}(x_1,\dots,\allowbreak x_{\lfloor\sqrt{n}\rfloor^2})$. This
  is a family whose coefficients are in $\P$ (hence in $\MA$) and without
  circuits of size $n^k$.

  On the other hand, if the Hamiltonian family $(\HC_n)$ has circuits of
  polynomial size over $\IC$, then $\PH=\MA$ by Lemma~\ref{lem:collapse}. Therefore the family
  of polynomials of Theorem~\ref{thm:coef-ph} has its coefficients in $\MA$.
\end{proof}

\subsubsection*{Hard polynomials that can be evaluated in $\AM$}

A family of polynomials $(P_n(x_1,\dots,x_n))$ is
said to be evaluable in $\AM$ if the language
$$\{(x_1,\dots,x_n,i,b)\ |\ \text{the $i$-th bit of }P_n(x_1,\dots,x_n)\text{
  is }b\}$$
is in $\AM$, where $x_1, \ldots, x_n, i$ are integers given in binary and $b \in \{0,1\}$.
In the next proposition, we show how to obtain polynomials which can
be evaluated in $\AM$.
The method is based on Santhanam~\cite{Santhanam09} and Koiran~\cite{Koiran96rr}.

\begin{prop}\label{prop:eval-am}
  Assume GRH is true. For any constant $k$, there is a family $(P_n)$ of
  polynomials on $n$ variables, with coefficients in $\{0,1\}$, of degree
  $n^{O(1)}$, evaluable in $\AM$ and such that $(P_n) \not\in \asize_\IC (n^k)$.
\end{prop}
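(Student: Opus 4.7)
The plan is to reuse the hard polynomial from Theorem~\ref{thm:coef-ph} and lift it to an $n$-variate polynomial whose evaluation admits an $\AM$ protocol. Set $s=n^k$ and $d=s^4$, let $\beta_n\in\{0,1\}^{d+1}$ be the lexicographically smallest word such that the system $S_{\beta_n}$ has no solution over $\IC$, and take $P_n(x_1,\dots,x_n)=\sum_{i=0}^{d}\beta_{n,i}\,x_1^i$, viewed as a polynomial in $n$ variables that formally depends only on $x_1$. Hardness against circuits of size $n^k$ over $\IC$ is already guaranteed by Theorem~\ref{thm:coef-ph}; the new content is the $\AM$ evaluation.

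The protocol on input $(x_1,\dots,x_n,i,b)$ is a single Santhanam~\cite{Santhanam09}-style round. Arthur broadcasts all the randomness needed by the various sub-protocols, Merlin replies with a candidate $\beta\in\{0,1\}^{d+1}$ (claimed to equal $\beta_n$) together with certificates of correctness, and Arthur deterministically computes $\sum_{i'}\beta_{i'}x_1^{i'}$ and compares its $i$-th bit to~$b$. The certificates consist of a Koiran~\cite{Koiran96rr}-style $\AM$-proof that $S_\beta$ is unsatisfiable over~$\IC$, together with, for each index $j$ with $\beta_j=1$, an extension $\beta^{(j)}$ agreeing with $\beta$ on positions $0,\dots,j-1$ and having $j$-th bit $0$, a Koiran $\AM$-proof that $S_{\beta^{(j)}}\in\HN$, and recursive sub-certificates asserting that $\beta^{(j)}$ is itself the lex-smallest satisfiable extension of the prefix $\beta[0..j-1]\cdot 0$. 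The recursion alternates between lex-smallest satisfiable and lex-smallest unsatisfiable extensions, its depth is $n^{O(1)}$, and all $\HN$/co-$\HN$ queries can be run in parallel with shared randomness, keeping the protocol inside one round of $\AM$.

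The technical core is to show that, under GRH, both $\HN$ and its complement lie in $\AM$. Koiran's~\cite{Koiran96rr} approach runs the Goldwasser--Sipser set lower bound on $A_S=\{(p,x)\colon p\text{ prime in a chosen range},\ x\in\IFp^n,\ S(x)\equiv 0\bmod p\}$. Under GRH, Lemma~\ref{lem:burgisser} forces a polynomial lower bound on $|A_S|$ when $S\in\HN$, while effective Nullstellensatz bounds $|A_S|$ by an absolute constant (the number of primes dividing the denominator of a $\IZ$-Nullstellensatz identity) when $S\notin\HN$. Enlarging the prime range widens the gap, and a single constant-factor Goldwasser--Sipser approximation of $|A_S|$, which fits in $\AM$, separates the two regimes and places the complement of $\HN$ in $\AM$ as well.

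The hardest step to execute cleanly is turning the naive ``for every $\beta'<\beta$, $S_{\beta'}$ is satisfiable'' universal quantifier, which a priori ranges over exponentially many words, into a polynomial-size $\AM$ certificate: the bit-by-bit recursive alternation described above reduces it to $n^{O(1)}$ $\HN$/co-$\HN$ queries and parallel $\AM$ composition keeps everything in one round. Soundness follows from the soundness of each Koiran sub-proof, since any $\beta$ accepted by Arthur must be both bad (by the co-$\HN$ sub-proof) and lex-minimal among bad words (by the chain of sub-proofs), hence equal to $\beta_n$; once this is in place, evaluation of $P_n=P_{\beta_n}$ at an integer point is immediate given~$\beta_n$.
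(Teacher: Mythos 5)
Your proof takes a different route from the paper's, and unfortunately the two load-bearing steps both have genuine gaps.

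\textbf{The coHN step.} You need an $\AM$ certificate that $S_\beta$ has \emph{no} solution over $\IC$. Koiran's protocol (and the Goldwasser--Sipser set lower bound it is built on) is one-sided: it lets Merlin certify that the set $E$ of good primes is \emph{large}, hence that $S\in\HN$. To put $\mathrm{co}\text{-}\HN$ in $\AM$ you would instead need Merlin to certify that $E$ is \emph{small}, i.e.\ a set upper-bound protocol. The lower-bound protocol does not give this: Merlin proving ``few collisions'' amounts to proving a $\Pi_1$ statement about a set with $\NP$-membership, which is not an $\AM$ task. Moreover, the would-be alternative — Merlin exhibits a Nullstellensatz identity $\sum g_iP_i=N$ and lets Arthur verify it by PIT — fails because the effective Nullstellensatz bounds on $\deg g_i$ and on $N$ are exponential in the input parameters, so the certificate does not fit in polynomial size. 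Nothing in Lemma~\ref{lem:burgisser} or in \cite{Koiran96,Koiran96rr} places $\mathrm{co}\text{-}\HN$ in $\AM$ under GRH; only $\HN\in\AM$ (hence $\HN\in\PH$) is established. So the sentence ``a single constant-factor Goldwasser--Sipser approximation of $|A_S|$ ... places the complement of $\HN$ in $\AM$ as well'' is not justified.

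\textbf{The lex-minimality certificate.} Even granting $\mathrm{co}\text{-}\HN\in\AM$, the recursive certificate does not establish what it needs to. To certify that $\beta$ is the lex-smallest $\gamma$ with $S_\gamma\notin\HN$, you must certify, for each $j$ with $\beta_j=1$, that \emph{every} extension of the prefix $\beta[0..j-1]\cdot 0$ is satisfiable — a universal statement over $2^{d-j}$ words. Producing a single satisfiable extension $\beta^{(j)}$, even the lex-smallest satisfiable one, gives no information about the other extensions; if anything, if $\beta^{(j)}$ is strictly larger than the all-zeros extension then the words preceding it are \emph{unsatisfiable}, which is the opposite of what is wanted. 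There is no monotonicity in $\gamma\mapsto[S_\gamma\in\HN]$ to exploit, so the naive decomposition of the $\forall$ into a prefix tree has exponential size and the claimed $n^{O(1)}$ depth does not materialize.

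The paper's proof sidesteps both problems by choosing a different family entirely: it case-splits on whether the permanent has polynomial-size circuits over~$\IC$. If it does, $\PH=\MA$ by Lemma~\ref{lem:collapse} and the $\PH$-computable family from Theorem~\ref{thm:coef-ph} is already evaluable in $\MA\subseteq\AM$. If it does not, the hard polynomial is defined directly from the permanent, and the $\AMA$ protocol only ever asks Merlin to certify a \emph{positive} statement — that a given small-circuit skeleton for $\per_t$ exists and its variables can be instantiated so that it computes the permanent modulo several primes. That statement reduces to a set lower bound, which the one-sided Goldwasser--Sipser protocol handles. No $\mathrm{co}\text{-}\HN$ queries and no lex-minimality certificates are needed.
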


\begin{proof}
  We adapt the method of Santhanam~\cite{Santhanam09} to the case of circuits with complex constants.

  If the permanent has polynomial-size circuits over $\IC$, then $\PH=\MA$
by Lemma~\ref{lem:collapse} and
  hence the family of polynomials of Theorem~\ref{thm:coef-ph} is evaluable in $\MA \subseteq \AM$.

  Otherwise, call $s(n)$ the minimal size of a circuit over $\IC$ for
  $\per_n$. The $n$-tuple of variables $(x_1,\dots,x_n)$ is split in two parts
  $(y,z)$ in the unique way satisfying $0<|y|\leq |z|$ and $|z|$ a power of
  two. Remark therefore that $|y|$ can take all the values from 1 to $|z|$
  depending on $n$. We now define the polynomial $P_n(y,z)$:
  $$\begin{cases}
     P_n(y,z)=\per(y) & \text{if }|y|\text{ is a
      square and }s(\sqrt{|y|}) \leq n^{2k}\\
    P_n(y,z)=0 & \text{otherwise.}
  \end{cases}$$

  Let us first show that $(P_n)$ does not have circuits of size $n^k$. By
  hypothesis there exist infinitely many $n$ such that $s(n)>(3n^2)^{2k}$: let
  $n_0$ be one of them and take $m$ the least power of two such that
  $s(n_0)\leq (m+n_0^2)^{2k}$, which implies $m\geq 2n_0^2$. Let
  $n_1=m+n_0^2$: by definition of $(P_n)$, we have
  $P_{n_1}(y,z)=\per_{n_0}(y)$. By definition of $m$, $s(n_0)>
  (m/2+n_0^2)^{2k}>(n_1/2)^{2k}>n_1^k$. This means that $\per_{n_0}$, and
  hence $P_{n_1}$, does not have circuits of size $n_1^k$.

  We now show that $(P_n)$ can be evaluated in $\AM$. We give an $\AMA$ protocol which is enough since $\AMA=\AM$ (see~\cite{BalcazarDG90}).
  
The protocol described below heavily relies on the technique used in~\cite[Theorem 2]{Koiran96rr} to prove that $\HN \in \AM$.

In the following, we need to test if $\per_t$ (for some $t$) has an arithmetic circuit of size $s$ over the complex field. If this is true, Merlin can give the skeleton of the circuit but he cannot give the complex constants. Hence, he gives
a circuit $C(y,u)$ where $y$ is the input (of size $t \times t$) and $u$ a tuple of formal variables. Consider the following system $S$ : for all $\varepsilon \in \{0,\ldots,2^s\}^{|y|}$, take the equation $C(\varepsilon,u)=\per_t(\varepsilon)$.
For some values $\alpha \in \IC^{|u|}$, the degree of the polynomial computed by the circuit $C(y,\alpha)$ is at most $2^s$; hence,
the system $S$ is satisfiable over $\IC$ iff the variables $u$ can be replaced by complex numbers $\alpha$ such that $C(y,\alpha)$ computes the permanent over the complex field.

The system $S$ has the following parameters: the number of variables is $|u|$ which is at most $s$, the degree of each equation is bounded by $2^s$, the number of equations is $2^{O(s^2)}$ and the bitsize of each coefficient is $2^{s^{O(1)}}$. Hence, by~\cite[Theorem 1]{Koiran96rr}, there is an integer $m=s^{O(1)}$ and $x_0 =2^{s^{O(1)}}$ such that the following holds.
Let $E$ be the set of primes $p$ smaller than $x_0$ such that $S$ has a solution modulo $p$.
\begin{itemize}
\item If $S$ is not satisfiable over $\IC$, then $|E| \leq 2^{m-2}$;
\item If $S$ is satisfiable over $\IC$, then $|E| \geq m2^m$.
\end{itemize}
Testing if $|E|$ is large or small is done via the following probabilistic argument. For some matrices $A_j$
over $\IF_2$, the predicate
$\phi(A_1,\ldots,A_m)$ is defined  as
$$\exists p_0,p_1,\ldots,p_m \in E\ :\ \psi(A_1,\ldots,A_m,p_0,\ldots,p_m)$$
where
$$ \psi(A_1,\ldots,A_m,p_0,\ldots,p_m)  \equiv \bigwedge_{j=1}^m
\left( A_j p_0 = A_j p_j \land p_0 \neq p_j \right).$$
If $A_j$ are seen as hashing functions, the predicate $\phi$ above expresses that there are enough collisions between elements of $E$.
Based on~\cite{Stockmeyer85}, it is proved in~\cite{Koiran96rr} that if $|E| \leq 2^{m-2}$, the probability that $\phi(A_1,\ldots,A_m)$ holds is at most $1/2$ when the matrices $A_j$ are chosen uniformly at random, whereas it is $1$ when $|E| \geq m2^m$.

We are now ready to explain the $\AMA$ protocol to evaluate the family $(P_n)$. On input $(x_1,\dots,x_n,i,b)$, the $\AMA$ protocol is the following:
  \begin{itemize}
  \item Arthur splits $(x_1,\dots,x_n)$ in $(y,z)$ in the unique way.
  If $|y|$ is not a square, he accepts if $b=0$ and rejects if $b \neq 0$.
Otherwise, call $t=\sqrt{|y|}$; Arthur sends to Merlin random matrices $A_1, \ldots, A_m$ over $\IFtwo$.
  \item  Merlin sends to Arthur the skeleton $C(y,u)$ of a circuit of size $\leq n^{2k}$ supposedly computing $\per_t$ over $\IC$ (that is, the circuit with complex constants replaced with formal variables $u$). He also sends 
  prime integers $p_0, \ldots,p_m$
  together with constants $\alpha_{p_j} \in \IF_{p_j}^{|u|}$ for $C$, for all $0 \leq j \leq m$. He also sends a prime number $p \geq n! M^n$ 
(where $M$ is the largest value in $(x_1,\dots,x_n)$) and constants of $\alpha_p$ over $\IF_p$ for $C$.
  \item Arthur checks that $p_0, \ldots, p_m$ produce a collision (that is,
    that $\psi(A_1,\dots,A_m,p_0,\dots,p_m)$ is true). Then he checks that all $p_j$ and $p$ are primes and that the circuits
  $C(y,\alpha_{p_j})$ and $C(y,\alpha_p)$
   compute the permanent modulo $p_0, \ldots, p_m, p$ (using the $\coRP$ algorithm of~\cite{KabanetsI04}).
If any of these tests fails, Arthur accepts iff $b=0$. Otherwise, he computes $C(y,\alpha_p)$ and accepts iff its $i$-th  bit is equal to $b$.
  \end{itemize}

  If $(y,z)$ is such that $|y|$ is a square and $s(|y|)\leq n^{2k}$, then
  $P_n(y,z)=\per(y)$. We show that Merlin can convince Arthur with probability 1. Merlin sends a correct skeleton $C$: since $|E|\geq m2^m$, there are prime integers $p_0,\dots,p_m\in E$ such that $\psi(A_1,\dots,A_m,p_0,\dots,p_m)$ holds. Merlin sends such numbers $p_j$ and $p$ together with the correct constants for the circuit $C$ to compute the permanent modulo $p_j$ and $p$. In the third round, all the verifications are satisfied with probability 1 and Arthur gives the right answer.

  On the other hand, if $|y|$ is not a square then whatever Merlin sends,
  Arthur accepts only if $b=0$, which is the right answer. Assume now that
  $s(|y|)> n^{2k}$; then $|E|\leq 2^{m-2}$. Whatever Merlin sends as prime
  numbers $p_j$, the probability (over the matrices $A$) that all $p_j$ belong
  to $E$ and produce a collision is at most $1/2$. Since the error when
  testing if $p_j \in E$ can be made as small as we wish (testing if
  $C(y,\alpha_{p_j})$ computes $\per(y)\mod p_j$ is done in $\coRP$), the probability that the whole protocol gives the wrong answer in this case is bounded by $2/3$.
\end{proof}

\section{Conditional lower bounds for uniform $\VNP$}

\subsubsection*{In characteristic zero}

In this whole section we assume GRH is true. Our main result in this section
is that if for all $k$, $\CeP$ has no circuits of size $n^k$,
then the same holds for $\unifVNP$ (in characteristic zero). For the clarity of exposition, we first
prove the weaker result where the assumption is on the class $\NP$ instead.

\begin{lem}\label{lem:NP}
If there exists $k$ such that $\unifVNP \subset \asize_\IC(n^k)$,
then there exists $\ell$ such that $\NP \subset \size(n^{\ell})$.
\end{lem}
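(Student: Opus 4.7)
The plan is to embed an arbitrary $\NP$-language $L$ inside a $\unifVNP$ polynomial whose values on $\{0,1\}^n$ detect membership, and then use GRH to convert the hypothesised small arithmetic circuit over $\IC$ into a small Boolean circuit by eliminating its complex constants modulo a well-chosen prime.

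Given $L \in \NP$ with witness relation $R$ and witness length $p(n)$, let $f(y) = |\{w : R(y,w)\}| \in \sharpP$, so that $L = \{y : f(y) > 0\}$. I would consider the multilinear interpolation polynomial
$$Q_n(X_1,\ldots,X_n) = \sum_{c \in \{0,1\}^n} f(c)\, \interpol{X}{c}{n},$$
which satisfies $Q_n(y) = f(y)$ for $y \in \{0,1\}^n$. Expanding the $(1-X_i)$ factors shows that the coefficient of $X^a$ in $Q_n$ is $\sum_{c \leq a}(-1)^{|a|-|c|}f(c)$, a difference of two $\sharpP$ functions, hence in $\GapP$; Proposition~\ref{prop:valiant-crit} then yields $Q_n \in \unifVNP$.

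By hypothesis, $Q_n$ has an arithmetic circuit $C_n$ of size $O(n^k)$ over $\IC$. Using Strassen's degree-homogenisation (recall $\deg Q_n = n$), I may further assume that $C_n$ has formal degree $O(n)$ and size still $\poly(n)$. Let $\alpha \in \IC^t$, $t = \poly(n)$, be the complex constants of $C_n$, and let $\tilde C_n(y,u)$ be the generic circuit obtained by replacing these constants by formal variables $u$; this is a polynomial over $\IZ$ in $(y,u)$. Consider the polynomial system
$$S = \{\tilde C_n(y,u) - f(y) = 0 : y \in \{0,1\}^n\}$$
in the $t$ unknowns $u$. The point $u=\alpha$ is a complex solution; each equation has degree $O(n)$ in $u$ and, by Lemma~\ref{lem:weight}, weight at most $2^{\poly(n)}$. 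Since the lower bound of Lemma~\ref{lem:burgisser} on $\pi_S(a)$ depends only on the number of unknowns, the degree and the weight of individual equations — but not on the number of equations — it applies despite $S$ having $2^n$ equations. Taking $a = 2^{C \cdot \poly(n)}$ for $C$ large enough forces $\pi_S(a) > \pi(2^{p(n)})$, which yields a prime $p$ with $2^{p(n)} < p \leq a$ and constants $\beta \in \IF_p^t$ such that $\tilde C_n(y,\beta) \equiv f(y) \pmod p$ for every $y \in \{0,1\}^n$.

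Hardwiring $p$ and $\beta$ as non-uniform advice of polynomial length and Boolean-simulating the arithmetic circuit over $\IF_p$ — each $\IF_p$-operation costing $\poly(\log p) = \poly(n)$ Boolean gates — yields a Boolean circuit that on input $y$ first computes $f(y) \bmod p = f(y)$ (since $p > 2^{p(n)} \geq f(y)$) and then outputs $[f(y) > 0] = [y \in L]$. The overall size is $n^\ell$ for some $\ell$ depending only on $k$; applied uniformly over $L \in \NP$ this gives $\NP \subset \size(n^\ell)$. The main obstacle is the GRH step: one has to check carefully that Lemma~\ref{lem:burgisser} genuinely does not degrade with $2^n$ equations, and that homogenisation keeps the weights of equations in $S$ below $2^{\poly(n)}$ — without the degree reduction, Lemma~\ref{lem:weight} would produce doubly-exponential weights and the prime $p$ would fail to fit in polynomially many bits.
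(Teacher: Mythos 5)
Your construction of the $\unifVNP$ polynomial $Q_n$ is the same as the paper's, but the system of equations you feed to Lemma~\ref{lem:burgisser} is genuinely different, and so is your route to the Boolean circuit. The paper only needs a circuit that distinguishes zero from nonzero: it sets up the system
$$\Bigl(\prod_{x \in L \cap \{0,1\}^n} C_n(x,Y)\Bigr)\cdot Z = 1,\qquad C_n(x,Y)=0 \text{ for } x\notin L,$$
with a single extra unknown $Z$, precisely to encode ``$C_n(x,Y)\neq 0$ for all $x\in L$'' without introducing exponentially many new variables. You instead impose the exact-value equations $\tilde C_n(y,u)=f(y)$ for every $y\in\{0,1\}^n$, which needs no auxiliary unknowns at all, but in exchange requires the prime $p$ to exceed $2^{q(n)}$ so that $f(y)\bmod p = f(y)$. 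Both encodings give a correct proof; yours is arguably cleaner (no product gadget, no $Z$), at the cost of a larger lower bound on $p$ — which is harmless, since $p$ still has polynomially many bits and that is all the final Boolean simulation needs.

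One claim in your proof is wrong, though, and it is worth flagging because it led you to add a step the paper does not take. You assert that without degree-homogenisation the weight bound $2^{2^{O(n^k)}}$ from Lemma~\ref{lem:weight} would make the prime ``fail to fit in polynomially many bits.'' That is not the case: the weight $w$ enters Lemma~\ref{lem:burgisser} only through the term $\sqrt{a}\log(wa)$, so $\log w = 2^{O(n^k)}$ is perfectly tolerable. The binding constraint on $a$ comes from $d^{O(n)}$ with $n$ the number of unknowns $O(n^k)$, giving $d^{O(n)} = 2^{O(n^{2k})}$ and hence $a = 2^{O(n^{2k})}$ (the paper's calculation); adding your extra requirement $\pi_S(a) > \pi(2^{q(n)})$ only raises this to $a = 2^{O(q(n)+n^{2k})}$, still polynomially many bits. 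So the homogenisation is unnecessary — which is fortunate, because the claim that Strassen homogenisation yields \emph{formal} degree $O(n)$ is itself dubious: homogenising with respect to the $X$-variables bounds the degree of each gate in $X$, not the formal degree counting constant leaves, and it is the latter (equivalently, the degree of $\tilde C_n(y,u)$ in $u$) that Lemma~\ref{lem:weight} and Lemma~\ref{lem:burgisser} care about. Drop the homogenisation step, use $d=2^{O(n^k)}$ as the paper does, and your proof is correct.
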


\begin{proof}
Let us assume that $\unifVNP \subseteq \asize_\IC(n^k)$.
Let $L \in \NP$. There is a polynomial $q$ and a polynomial time
computable relation $\phi : \{0,1\}^* \times \{0,1\}^* \rightarrow \{0,1\}$ such that for all $x \in \{0,1\}^n$,
$x \in L$ if and only if $\exists y \in \{0,1\}^{q(n)}\ \phi(x,y)=1$.

We define the polynomial $P_n$ by
 $$P_n(X_1,\ldots,X_n) = \sumonwords{x}{n} \left(
 \sumonwords{y}{q(n)} \phi(x,y) \right) \interpol{X}{x}{n}.$$
Note that for $x \in \{0,1\}^n$, $P_n(x)$ is the number of elements $y$ in relation with $x$ via $\phi$.
By Valiant's criterion (Proposition~\ref{prop:valiant-crit}),
the family $(P_n)$ belongs to
$\unifVNP$ in characteristic $0$. By hypothesis, there exists
a family of arithmetic circuits $(C_n)$ over $\IC$ computing $(P_n)$, with
$C_n$ of size $t=O(n^k)$.

Let $\alpha=(\alpha_1, \ldots, \alpha_t)$ be the complex constants used by the
circuit. We have
$P_n(X_1,\ldots,X_n) = C_n (X_1,\ldots,X_n,\alpha)$.
Take one unknown $Y_i$ for each $\alpha_i$ and one additional unknown $Z$,
and consider the following system $S$:

$$\left\{
 \begin{array}{l}
 	\left( \prod_{x \in L \cap \{0,1\}^n} C_n(x, Y) \right) \cdot Z = 1\\
  	C_n(x, Y) = 0 \text{ for all } x \in \{0,1\}^n \setminus L.
 \end{array}
\right.$$

Note that introducing one equation for each $x \in L \cap \{0,1\}^n$
(as we did for each $x \in \{0,1\}^n \setminus L$) would not work
since it would require to introduce an exponential number of new variables.
 
Let $\beta = \left( \prod_{x \in L \cap \{0,1\}^n} C_n(x, \alpha)
\right)^{-1}$. Then $(\alpha,\beta)$ is a solution of $S$ over $\IC$.

The system $S$ has $t+1=O(n^k)$ unknowns. The degree of $C_n(x,Y)$ is bounded
by $2^t$; hence the degree of $S$ is at most $2^{O(n^k)}$.
Moreover, the weight of the polynomials in $S$ is bounded by
$2^{2^{O(n^k)}}$ using Lemma~\ref{lem:weight}.

Since the system $S$ has the solution $(\alpha,\beta)$
over $\IC$, by Lemma~\ref{lem:burgisser}
it has a solution over $\IFp$ for some
$p$ small enough. We recall that $\pi(p) \sim p / \log p$;
hence the system $S$ has a solution
over $\IFp$ for $p = 2^{O(n^{2k})}$. 

Consider $p$ as above and $(\alpha',\beta')$ a solution
of the system $S$ over $\IFp$.
By definition of $S$, when the circuit $C_n$ is evaluated over $\IFp$, the following is satisfied:
$$\begin{cases}
  \forall x\in L \cap \{0,1\}^n,& C_n(x,\alpha') \neq 0,\\
  \forall x\in \{0,1\}^n\setminus L,& C_n(x,\alpha')= 0.\\
\end{cases}$$
Computations over $\IFp$ can
be simulated by Boolean circuits, using $\log_2 p$ bits to represent
an element of $\IFp$, and $O(\log^2 p)$ gates
to simulate an arithmetic operation.
This yields Boolean circuits of size $n^\ell$ for $\ell = O(k)$
to decide the language $L$.
\end{proof}

\begin{thm}\label{thm:zero-char}
Assume GRH is true. Suppose one of the following conditions holds:
\begin{enumerate}
\item $\NP \not\subset \size(n^k)$ for all $k$;
\item $\CeP \not\subset \size(n^k)$ for all $k$;
\item $\MA \subset \size(n^k)$ for some $k$;
\item $\NP=\MA$.
\end{enumerate}
Then $\unifVNP \not\subset \asize_\IC(n^k)$ for all $k$.
\end{thm}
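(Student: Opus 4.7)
The plan is to handle each of the four conditions via a combination of Lemma~\ref{lem:NP}, a parallel version of it for $\CeP$, Lemma~\ref{lem:collapse}, Toda's inclusion $\PH\subseteq\CH$, and Kannan's bound $\mathsf{\Sigma_2^p}\not\subseteq\size(n^k)$. Condition~(1) follows immediately by taking the contrapositive of Lemma~\ref{lem:NP}: if $\NP\not\subseteq\size(n^k)$ for all $k$, then no fixed $k_0$ can witness $\unifVNP\subseteq\asize_\IC(n^{k_0})$.

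For condition~(2), I would adapt the proof of Lemma~\ref{lem:NP} to $\CeP$. Given $L\in\CeP$ with $L=\{x:g(x)=0\}$ for some $g\in\GapP$, set
$$P_n(X_1,\ldots,X_n)=\sumonwords{x}{n} g(x)\cdot\interpol{X}{x}{n}.$$
The coefficient of $X_1^{\alpha_1}\cdots X_n^{\alpha_n}$ equals $\sum_{y\leq\alpha}(-1)^{|\alpha|-|y|}g(y)$, which lies in $\GapP$ by closure under polynomially indexed sums, so $(P_n)\in\unifVNP$ by Proposition~\ref{prop:valiant-crit}. Assuming $\unifVNP\subseteq\asize_\IC(n^{k_0})$ and letting $C_n(X,\alpha)$ be an arithmetic circuit of size $O(n^{k_0})$ computing $P_n$, one has $C_n(x,\alpha)=g(x)$ on $\{0,1\}^n$. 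Consider the system with unknowns $(Y,Z)$:
$$C_n(x,Y)=0 \text{ for } x\in L\cap\{0,1\}^n,\qquad \Bigl(\prod_{x\in\{0,1\}^n\setminus L}C_n(x,Y)\Bigr)Z=1.$$
It is satisfied over $\IC$ by $(\alpha,\beta)$ with $\beta$ the inverse of the product, and its parameters (number of unknowns, degree, weight) match those in the proof of Lemma~\ref{lem:NP}. Hence Lemma~\ref{lem:burgisser} yields a solution modulo a prime $p=2^{O(n^{2k_0})}$, and Boolean simulation of the resulting $\IFp$-evaluation decides $L$ in size $n^\ell$ for some $\ell$, proving $\CeP\subseteq\size(n^\ell)$, i.e.\ the contrapositive of~(2).

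For conditions~(3) and~(4), I would case split on whether $\HC\in\VP_\IC$. If $\HC\notin\VP_\IC$, the $n$-variate family $\HC'_n$ introduced in the proof of Corollary~\ref{cor:coef-ma} belongs to $\unifVNP$ (its coefficients are computable in $\P$) and is not in $\asize_\IC(n^k)$ for any $k$, so the conclusion already holds. If $\HC\in\VP_\IC$, then $\VP=\VNP$ over $\IC$ and Lemma~\ref{lem:collapse} (using GRH) gives $\CH=\MA$, hence $\mathsf{\Sigma_2^p}\subseteq\PH\subseteq\CH=\MA$. Under hypothesis~(3), this directly yields $\mathsf{\Sigma_2^p}\subseteq\size(n^k)$, contradicting Kannan, so the case $\HC\in\VP_\IC$ cannot occur. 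Under hypothesis~(4), I instead argue by contradiction: assuming $\unifVNP\subseteq\asize_\IC(n^{k_0})$, Lemma~\ref{lem:NP} gives $\NP\subseteq\size(n^\ell)$, and together with $\NP=\MA$ we obtain $\mathsf{\Sigma_2^p}\subseteq\MA\subseteq\size(n^\ell)$, again contradicting Kannan.

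The main obstacle I anticipate is setting up the $\CeP$-analogue of Lemma~\ref{lem:NP}: the equational system must swap the roles of $L$ and its complement (membership in $L$ now corresponds to $C_n(x,\alpha)=0$ rather than $\neq 0$), and one has to verify that the Möbius-style alternating coefficient $\sum_{y\leq\alpha}(-1)^{|\alpha|-|y|}g(y)$ still falls in $\GapP$ so that Valiant's criterion applies. The remainder of the argument is essentially chaining the existing ingredients, with the final Kannan step ruling out the collapsed case $\HC\in\VP_\IC$ in conditions~(3) and~(4).
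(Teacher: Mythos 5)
Your proof is correct, but for parts (3) and (4) it takes a somewhat different route than the paper. Parts (1) and (2) match the paper's argument: condition (1) is the contrapositive of Lemma~\ref{lem:NP}, and condition (2) is the same $\CeP$-adaptation with the roles of $L$ and its complement swapped in the equational system; your verification that the coefficients $\sum_{y\leq\alpha}(-1)^{|\alpha|-|y|}g(y)$ lie in $\GapP$ and hence Valiant's criterion applies is exactly what is needed. For condition~(3), the paper argues contrapositively: if $\unifVNP\subseteq\asize_\IC(\poly)$, then $\VP=\VNP$, hence $\MA=\PP$ (via $\CH=\MA$), and since Vinodchandran's theorem gives $\PP\not\subseteq\size(n^k)$ for all $k$, this contradicts $\MA\subseteq\size(n^k)$. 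You instead case split on whether $\HC\in\VP_\IC$: if not, $\HC'_n$ is the hard family; if so, $\CH=\MA$ and then $\mathsf{\Sigma_2^p}\subseteq\CH=\MA\subseteq\size(n^k)$ contradicts Kannan's theorem, so this case is vacuous. Your choice substitutes the more classical Kannan bound for the sharper (and more specialized) Vinodchandran bound on $\PP$; both are sound, and the paper's version has the minor advantage of a one-line argument once the collapse is in hand. For condition~(4), the paper performs a clean case split on whether $\NP\subseteq\size(n^k)$ for some $k$, reducing to conditions (1) or (3); you instead reuse the $\HC\in\VP_\IC$ case split and, in the collapsed case, chain $\mathsf{\Sigma_2^p}\subseteq\MA$ (from $\CH=\MA$) with $\MA=\NP\subseteq\size(n^\ell)$ (from Lemma~\ref{lem:NP} and hypothesis (4)) to contradict Kannan. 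This works, though a reader should be careful that the inclusion $\mathsf{\Sigma_2^p}\subseteq\MA$ is borrowed from the case assumption $\HC\in\VP_\IC$ (via $\CH=\MA$), not from $\NP=\MA$ alone; as written, the phrase ``together with $\NP=\MA$'' slightly obscures where that inclusion comes from.
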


\begin{proof}
The first point is proved in Lemma~\ref{lem:NP}.

The second point subsumes the first since $\coNP \subseteq \CeP$. It can be
proved in a very similar way. Indeed consider $L \in \CeP$ and $f\in\GapP$
such that $x\in L\iff f(x)=0$, and its associated family of polynomials
 $$P_n(X_1,\ldots,X_n) = \sumonwords{x}{n} f(x) \interpol{X}{x}{n}$$
as in the proof of Lemma~\ref{lem:NP}. Then for all $x\in\{0,1\}^n$,
$P_n(x)=0$ iff $x\in L$.
The family $(P_n)$ belongs to $\unifVNP$ and thus,
assuming $\unifVNP \subset \asize_\IC(n^k)$, has arithmetic circuits $(C_n)$
over $\IC$ of size $t=O(n^k)$. Constants of $\IC$ are replaced with elements
of a small finite field by considering the system:
$$\left\{
 \begin{array}{l}
   C_n(x, Y) = 0 \text{ for all } x \in L \cap \{0,1\}^n\\
   \left( \prod_{x \in \{0,1\}^n \setminus L} C_n(x, Y) \right) \cdot Z = 1.
 \end{array}
\right.$$
The end of the proof is similar.

For the third point, let us assume $\unifVNP \subset \asize_\IC(\poly)$.
It implies $\VP = \VNP$ thanks to the $\VNP$-completeness of the
uniform family $(\HC_n)$, then $\MA = \PP$ by Lemma~\ref{lem:collapse}.
This implies $\MA \not\subset \size(n^k)$ for all $k$
since $\PP \not\subset \size(n^k)$ for all $k$~\cite{Vinodchandran05}.

For the last point, assume $\NP=\MA$. If $\NP$ is without $n^k$ circuits
for all $k$, then the conclusion comes from the first point.
Otherwise $\MA$ has $n^k$-size circuits and the conclusion follows
from the previous point.
\end{proof}

For any constant $c$, the class $\P^{\NP[n^c]}$ is the set of languages
decided by a polynomial time machine making $O(n^c)$ calls to an $\NP$
oracle. It is proven in~\cite{FortnowSW09} that $\NP \subset \size(n^k)$
implies $\P^{\NP[n^c]} \subset \size(n^{ck^2})$. Hence, it is enough to assume
fixed-polynomial lower bounds on this larger class $\P^{\NP[n^c]}$ for some
$c$ to get fixed-polynomial lower bounds on $\unifVNP_\IC$.

\subsubsection*{An unconditional lower bound in characteristic zero}

In this part we do not allow arbitrary constants in circuits.
We consider instead circuits with $-1$ as the only scalar that can label the leaves. For $s : \IN \rightarrow \IN$, let $\asizez (s)$ be the family of polynomials computed by families of unbounded degree constant-free circuits of size $O(s)$ (in characteristic zero). Note that the formal degree of these circuits are not polynomially bounded: hence, large constants produced by small arithmetic circuits can be used.

We first need a result of~\cite{AllenderBKM09}. Let
$\mathrm{PosCoefSLP}$
be the following problem: on input $(C,i)$ where $C$ is a constant-free
circuit with one variable $x$ and $i$ is an integer, decide whether the
coefficient of $x^i$ in the polynomial computed by $C$ is positive.
\begin{lem}[\cite{AllenderBKM09}]\label{lem:poscoefslp}
  $\mathrm{PosCoefSLP}$ is in $\CH$.
\end{lem}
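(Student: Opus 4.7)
The plan is to reduce $\mathrm{PosCoefSLP}$ to $\mathrm{BitSLP}$, the problem of extracting a designated bit of the integer value of a constant-free straight-line program on an explicit integer input, and then invoke the upper bound $\mathrm{BitSLP} \in \CH$ that forms the technical core of~\cite{AllenderBKM09}. First I would record the standard quantitative facts about a constant-free circuit $C$ of size $s$: it computes a polynomial $P \in \IZ[x]$ of degree at most $2^s$ whose coefficients $a_j$ satisfy $|a_j| \leqslant 2^{2^{O(s)}}$ by a straightforward induction on gates. In particular the queried index $i$ has bit-length at most $s$, while $a_i$ is a doubly-exponential integer described only implicitly by $C$, which is precisely what makes testing its sign nontrivial.

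Next I would apply a Kronecker substitution. Pick $N = 2^{O(s)}$ large enough that $2^N$ strictly exceeds $\sum_j |a_j|$, and separate $P$ into positive and negative parts. By propagating at each gate a pair $(u_+, u_-)$ standing for the positive and negative parts of the subcircuit's value — with the recursions $(u_+ + v_+,\, u_- + v_-)$ under $+$ and $(u_+ v_+ + u_- v_-,\, u_+ v_- + u_- v_+)$ under $\times$ — one obtains, with only polynomial blow-up, two constant-free circuits $C_+$ and $C_-$ computing polynomials $P_+, P_- \in \IN[x]$ satisfying $P = P_+ - P_-$. By the choice of $N$, the integer $P_\pm(2^N)$ is exactly the base-$2^N$ concatenation of the nonnegative coefficients of $P_\pm$, so the block of bits in positions $iN, iN+1, \ldots, (i+1)N-1$ of $P_\pm(2^N)$ is the coefficient of $x^i$ in $P_\pm$. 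Deciding the sign of $a_i$ thus reduces to comparing two explicit bit-blocks of the doubly-exponential integers $P_+(2^N)$ and $P_-(2^N)$.

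Each such bit-block comparison is a Boolean combination of polynomially many $\mathrm{BitSLP}$ queries: the SLPs in question are obtained by feeding the integer $2^N$ (itself representable by a constant-free SLP of $O(\log N) = O(s)$ repeated squarings) into $C_+$ and $C_-$. Chaining these reductions, $\mathrm{PosCoefSLP}$ inherits the $\CH$ upper bound of $\mathrm{BitSLP}$. The hard step, and the main obstacle, is precisely this black box: showing $\mathrm{BitSLP} \in \CH$ in~\cite{AllenderBKM09} requires expressing the value $P_\pm(2^N) \bmod p$ for primes $p$ of polynomial bit-length as a $\GapP$-computable quantity, then using iterated $\PP$ oracles to perform Chinese remaindering and recover individual bits of an integer described only implicitly. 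My reduction from $\mathrm{PosCoefSLP}$ to $\mathrm{BitSLP}$ is the soft combinatorial wrapping around that harder result.
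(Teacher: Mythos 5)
The paper gives no proof of this lemma; it is cited directly from~\cite{AllenderBKM09}, so there is no in-paper argument to compare against. Your overall plan — split $P$ into $P_+-P_-$ via the pair-propagation trick, evaluate at $2^N$, and read coefficients off a $\mathrm{BitSLP}$ oracle — is the right kind of reduction, and the quantitative setup (degree $\leqslant 2^s$, coefficient magnitudes $\leqslant 2^{2^{O(s)}}$, block width $N=2^{O(s)}$, bit positions $iN+j$ of polynomial bit-length, and an $O(s)$-gate SLP for $2^N$ by repeated squaring) is all sound.

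The gap is the last step: ``each such bit-block comparison is a Boolean combination of polynomially many $\mathrm{BitSLP}$ queries.'' Each block has $N=2^{O(s)}$ bits, exponential in the input size, so deciding whether the $i$-th block of $P_+(2^N)$ exceeds that of $P_-(2^N)$ is not a polynomial-size Boolean combination of $\mathrm{BitSLP}$ queries, nor is it achievable by polynomially many adaptive queries. Two repairs are available. (a) Express the comparison as a $\Sigma_2^p$ predicate over $O(s)$-bit indices relative to a $\mathrm{BitSLP}$ oracle: there exists $j<N$ with bit $iN+j$ of $P_+(2^N)$ equal to $1$, bit $iN+j$ of $P_-(2^N)$ equal to $0$, and for all $j'$ with $j<j'<N$ the corresponding bits agree; since $\CH$ absorbs $\PH$ relativized to any of its own levels, this still lands in $\CH$ — but that closure property must be invoked explicitly, it is not ``a Boolean combination of polynomially many queries.'' (b) Bypass the comparison entirely. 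Let $G(y)=\prod_{k=0}^{s}\bigl(1+y^{2^k}\bigr)=\sum_{j=0}^{D} y^j$ with $D=2^{s+1}-1\geqslant\deg P$, a constant-free circuit of size $O(s)$, and consider the single SLP integer $R=P(2^N)+(2^{N-1}-1)\,G(2^N)$. Choosing $N=2^{O(s)}$ so that $|a_j|\leqslant 2^{N-1}-1$, every base-$2^N$ digit of $R$ equals $a_j+2^{N-1}-1\in[0,2^N)$, so no carries cross block boundaries, and the bit at position $iN+N-1$ is $1$ exactly when $a_i\geqslant 1$, i.e.\ $a_i>0$. This turns $\mathrm{PosCoefSLP}$ into a single polynomial-time $\mathrm{BitSLP}$ query, which is tighter and closer in spirit to the sign- and bit-extraction machinery actually developed in~\cite{AllenderBKM09}.
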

\begin{thm}
$\unifVNP \not\subset \asizez (n^k)$ for all $k$.
\end{thm}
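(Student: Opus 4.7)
My plan is to argue by contradiction. Assume that $\unifVNP \subseteq \asizez(n^k)$ for some fixed $k$; I will derive that $\PP$ has Boolean circuits of size $n^{k+O(1)}$, contradicting Vinodchandran's unconditional theorem $\PP \not\subset \size(n^{k'})$ for every $k'$~\cite{Vinodchandran05} (already invoked in the proof of Theorem~\ref{thm:zero-char}). The bridge between the two settings is a $\unifVNP$ polynomial whose coefficients encode a $\PP$-complete language, together with a modular-arithmetic simulation of the hypothesised small constant-free circuit.

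Concretely, let $f_n \colon \{0,1\}^n \to \IZ$ be defined by $f_n(\varepsilon) = 2\#\mathrm{SAT}(\phi_\varepsilon) - 2^{m(\varepsilon)}$, where $\phi_\varepsilon$ is the Boolean formula on $m(\varepsilon) \leq n$ variables canonically encoded by $\varepsilon$. Then $f_n \in \GapP$, and $f_n(\varepsilon) > 0$ iff $\varepsilon \in \mathrm{MAJSAT}$, a $\PP$-complete problem. I would consider the polynomial
$$P_n(X_1,\ldots,X_n) = \sum_{\varepsilon \in \{0,1\}^n} f_n(\varepsilon)\,\interpol{X}{\varepsilon}{n}.$$
Expanding the interpolation monomials and applying Möbius inversion, the coefficient of $X_1^{c_1}\cdots X_n^{c_n}$ in $P_n$ is a signed sum of values of $f_n$; the closure of $\GapP$ under exponentially long sums makes this coefficient function lie in $\GapP$, so Proposition~\ref{prop:valiant-crit} yields $P_n \in \unifVNP$. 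Under the contradiction hypothesis, there is then a constant-free circuit $C_n$ of size at most $n^k$ computing $P_n$, which in particular satisfies $C_n(\varepsilon) = f_n(\varepsilon)$ for every $\varepsilon \in \{0,1\}^n$.

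The final step turns $C_n$ into a small Boolean circuit. By Bertrand's postulate, fix for each $n$ a prime $p_n$ with $2^{n+2} < p_n \leq 2^{n+3}$; since $|f_n(\varepsilon)| \leq 2^n < p_n/2$, the integer $f_n(\varepsilon)$ is uniquely recovered from $C_n(\varepsilon) \bmod p_n$. Replacing each of the $n^k$ arithmetic gates of $C_n$ by a Boolean subcircuit implementing the same operation modulo $p_n$ on $O(n)$-bit operands (size $O(n^2)$ per gate) produces a Boolean circuit of total size $O(n^{k+2})$ whose output sign decides $\mathrm{MAJSAT}$. Hence $\PP \subseteq \size(n^{k+O(1)})$, contradicting~\cite{Vinodchandran05}, and therefore $\unifVNP \not\subset \asizez(n^k)$ for every $k$. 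The only step demanding genuine care is the $\unifVNP$-membership verification of $P_n$ through Valiant's criterion; the modular simulation and the appeal to the $\PP$-lower bound are routine. Lemma~\ref{lem:poscoefslp} provides an alternative bridge, since it places bit-extraction of coefficients of constant-free circuits in $\CH$, but the direct modular simulation above already suffices for a fixed-polynomial Boolean bound.
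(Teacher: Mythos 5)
Your proof is correct, but it takes a genuinely different route from the paper's. The paper argues in two cases: if the permanent lacks polynomial-size constant-free circuits it is itself the hard family; otherwise $\CH=\MA$, and a counting argument produces a ``sign condition'' that no size-$n^k$ constant-free univariate circuit can realize, whose defining bits lie in $\PH^{\mathrm{PosCoefSLP}}\subseteq\CH=\MA$ by Lemma~\ref{lem:poscoefslp}, and an amplified $\MA$ protocol is then turned into a $\GapP$ function whose multilinear extension realizes that sign condition. You instead argue directly by contradiction: the multilinear extension of a $\GapP$ function encoding a $\PP$-complete problem lies in $\unifVNP$ by Valiant's criterion, and a hypothetical constant-free circuit of size $O(n^k)$ for it can be evaluated at Boolean points modulo an $O(n)$-bit prime; since the target values are small relative to the prime, the residue recovers the exact $\GapP$ value and hence its sign, yielding fixed-polynomial-size Boolean circuits for $\PP$ and contradicting Vinodchandran's theorem that $\PP\not\subset\size(n^{k'})$ for all $k'$ (already invoked in the paper for Theorem~\ref{thm:zero-char}). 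Your shortcut is available precisely because $\asizez$ circuits are constant-free, so no GRH-based constant elimination is needed --- one simply reduces modulo a prime, which is not an option in the $\IC$ setting of Theorem~\ref{thm:zero-char}. It sidesteps the case analysis, the sign-condition diagonalization, and Lemma~\ref{lem:poscoefslp} entirely; the only price is that the paper's construction exhibits a concrete hard family, whereas yours is a pure contrapositive. One small inaccuracy: extending from your specific $\mathrm{MAJSAT}$-like language to all of $\PP$ involves a polynomial-time many-one reduction that blows up the input length polynomially, so the Boolean circuit bound you obtain for an arbitrary $\PP$ language is $n^{O(k)}$ rather than $n^{k+O(1)}$; this is still a fixed polynomial, so the contradiction goes through unchanged.
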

\begin{proof}
  If the permanent family does not have constant-free arithmetic circuits of
  polynomial size, then this family matches the statement.

  Otherwise, $\CH=\MA$ by Lemma~\ref{lem:collapse}. For a given constant-free circuit
  $C$ computing a univariate polynomial $P=\sum_{i=0}^d a_ix^i$, its ``sign
  condition'' is defined as the series $(b_i)_{i\in\IN}$ where
  $b_i\in\{0,1\}$, $b_i=1$ iff $a_i>0$.

  Note that for some constant $\alpha$, there are at most $2^{n^{\alpha k}}$
  different sign conditions of constant-free circuits of size $n^k$ (at most
  one per circuit). Hence there exists a sign condition
$$(b_0, \dots, b_{n^{\alpha k}}, 0, 0, \dots)$$
such that any polynomial with such
  a sign condition is not computable by constant-free circuits of size
  $n^k$. We define $b_0,\dots,b_{n^{\alpha k}}$ to be the lexicographically
  first such bits.

  We can express these bits as the first in lexicographic order such that for
  every constant-free circuit $C$, there exists $i$ such that:
   $$\text{$b_i=0$ iff the
  coefficient of $x^i$ in $C$ is positive.}$$
 Therefore they can be computed in
  $\PH^{\mathrm{PosCoefSLP}}$, hence in $\CH$ by Lemma~\ref{lem:poscoefslp},
  hence in $\MA$ since $\CH=\MA$. By reducing the probability of error in the
  $\MA$ protocol, this means that there exists a polynomial-time function
  $a:\{0,1\}^\star\to\{0,1\}$ such that:
  $$\begin{cases}
    \exists y\sum_r a(i,y,r) \geq (1-2^{-|y|-1})N & \text{if }b_i=1\\
    \forall y\sum_r a(i,y,r) \leq 2^{-|y|-1}N & \text{if }b_i=0,
  \end{cases}$$
  where $y$ and $r$ are words of polynomial size, and where $N=2^{|r|}$. Now,
  the following polynomial family:
  $$P_n(x)=\sum_{i=0}^{n^{\alpha k}}\biggl(\bigl(\sum_{y,r}a(i,y,r)\bigr)-N/2\biggr)x^i$$
  is in $\unifVNP$ and has sign condition $(b_0,\dots,b_{n^{\alpha
      k}},0,0,\dots)$.
\end{proof}

\subsubsection*{In positive characteristic}

This subsection deals with fixed-polynomial lower bounds in positive characteristic. The results are presented in characteristic $2$ but they hold in any positive characteristic $p$ (replacing $\parityP$ with $\ModpP$).

\begin{lem}\label{lem:ham}
Consider the polynomial
$$P(X_1,\ldots,X_n) = \sum_{y_1,\ldots,y_p \in \{0,1\}}
C(X_1,\ldots,X_n,y_1,\ldots,y_p)$$
where $C$ is an arithmetic circuit of size $s$ and total degree at most $d$
(with respect to all the variables $X_1\ldots,X_n,y_1,\ldots,y_p$).
Then $P$ is a projection of $\HC_{(sd)^{O(1)}}$.
\end{lem}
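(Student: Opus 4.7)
The plan is to apply the classical proof that $\HC$ is $\VNP$-complete under projections (Valiant~\cite{Valiant79}, see also the exposition in~\cite{Burgisser00}), while tracking the quantitative dependence on both the size $s$ and the total degree $d$ of $C$.

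The first step is to replace $C$ by a circuit of small formal degree. The formal degree of $C$ could a priori be as large as $2^s$, but Valiant's reduction only yields a polynomial-size graph when the input circuit has polynomially-bounded formal degree. Since the total degree of the polynomial computed by $C$ is $d$, one applies the standard homogenization procedure: introduce $d+1$ copies of each gate of $C$, one for each possible homogeneous degree $0,\ldots,d$, wired so that the $k$-th copy of a gate computes the degree-$k$ homogeneous component of the subpolynomial at that gate. This produces a circuit $\tilde{C}$ of size $O(sd^2)$ and formal degree at most $d$ computing the same polynomial as $C$.

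The second step is Valiant's graph construction applied to $P=\sum_{y\in\{0,1\}^p}\tilde{C}(x,y)$. One first converts $\tilde{C}$ into an arithmetic branching program of size $\poly(s,d)$ (via the standard depth reduction exploiting formal degree at most $d$), and then builds a weighted directed graph $G$ on $m=\poly(s,d)$ vertices whose edges carry weights in $\{0,1\}\cup\{X_1,\ldots,X_n\}$, such that the sum of weight-products over Hamiltonian cycles of $G$ equals $P$. The $p$ summation variables $y_i$ are handled by small choice gadgets offering two alternative traversals that correspond to $y_i=0$ and $y_i=1$ respectively. This exhibits $P$ as a projection of $\HC_m$ with $m=(sd)^{O(1)}$.

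The main obstacle is getting the homogenization step right: one must ensure the size blows up only by a $\poly(d)$ factor, since otherwise the subsequent branching program and graph would be exponentially large in $s$. Once the formal degree is bounded by $d$, the conversion to a branching program and the graph construction are routine and give the stated polynomial bound.
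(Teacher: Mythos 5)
Your homogenization step is fine (it yields a circuit of size $O(sd^2)$ and formal degree $O(d)$ computing the same polynomial), but the next step has a genuine gap. You claim that $\tilde{C}$ can be converted into an arithmetic branching program of size $\poly(s,d)$ ``via the standard depth reduction.'' This is not a known result: converting an arbitrary polynomial-size, polynomial-degree arithmetic circuit into a polynomial-size ABP would prove $\VP = \mathsf{VBP}$, which is open (equivalently, it would make the determinant $\VP$-complete under quasi-polynomial projections, or rather polynomial projections). What the VSBR depth reduction actually gives from a circuit of size $s'$ and degree $d$ is a circuit of depth $O(\log s' \log d)$ and size $\poly(s')$; turning that into a formula or ABP costs $s'^{O(\log d)}$, which is quasi-polynomial. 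So your route would only show that $P$ is a projection of $\HC_{(sd)^{O(\log d)}}$, which is strictly weaker than the claimed $\HC_{(sd)^{O(1)}}$.

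The paper sidesteps this by exploiting the outer sum. It applies the theorem $\VNP=\VNPe$: starting from $P=\sum_{y}\tilde{C}(x,y)$, one rewrites $P=\sum_{z} F(x,z)$ where $F$ is a \emph{formula} of size $(sd)^{O(1)}$ but the summation is now over a (larger) set of auxiliary Boolean variables $z$ that, roughly speaking, encode parse trees of the circuit. It is precisely the freedom to add new summation variables that avoids the quasi-polynomial blow-up in the circuit-to-formula step; no direct ``$\VP\subseteq\mathsf{VBP}$''-type conversion is used. Then one invokes Malod's gadgets: a formula of size $m$ is a projection of $\HC_{O(m)}$, and the exponential sum over $z$ is realized by sum/XOR gadgets of total size $O(m)$. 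Your ``choice gadget'' idea for the $y_i$ is the right intuition for this last part, but it is not where the difficulty lies. To fix your proof you would need to replace the circuit-to-ABP step by an appeal to $\VNP=\VNPe$ (with the quantitative size bound $(sd)^{O(1)}$ on the resulting formula), introducing new summation variables rather than trying to simulate the circuit by a branching program directly.
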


\begin{proof}
This lemma follows from a careful inspection of the proof of $\VNP$-completeness of the Hamiltonian given in Malod~\cite{Malod03}. We give some more details below.

From the fact that $\VNP = \VNPe$~\cite[Theorem 2.13]{Burgisser00}, we can write $P$ as a Boolean sum of formulas, i.e.
$$P(X_1,\ldots,X_n) = \sum_{z_1,\ldots,z_q \in \{0,1\}}
F(X_1,\ldots,X_n,z_1,\ldots,z_q).$$
Moreover, $q=s^{O(1)}$ and an inspection of the proof of $\VNP = \VNP_e$ given in~\cite{Malod03} shows that the size of the formula $F$ is $(sd)^{O(1)}$.
By~\cite[Lemme 8]{Malod03}, a formula is a projection of the Hamiltonian circuit polynomial of linear size. This yields
$$P(X_1,\ldots,X_n) = \sum_{z_1,\ldots,z_q \in \{0,1\}} \HC_{s'}(a_1,\ldots,a_{s'})$$
where $s'=(sd)^{O(1)}$ and $a_i \in \{X_1,\ldots,X_n,z_1,\ldots,z_q,-1,0,1\}$.
At last, in order to write this exponential sum as a projection of a not too large Hamiltonian circuit, a sum gadget of size $O(q)$ and $O(s')$ XOR gadgets of size $O(1)$ are needed~\cite[Théorème 7]{Malod03}. Hence, the polynomial $P$ is a projection of $\HC_{(sd)^{O(1)}}$.
\end{proof}

\begin{thm}\label{thm:positive-char}
The following are equivalent:
\begin{itemize}
\item $\unifVNP_\IFtwo \subset \asize_\IFtwo (n^k)$ for some $k$;
\item $\VP_\IFtwo = \VNP_\IFtwo$ and
$\parityP \subset \size(n^k)$ for some $k$.
\end{itemize}
\end{thm}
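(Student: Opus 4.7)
My plan is to handle the two implications separately. For the forward direction, assume $\unifVNP_\IFtwo \subset \asize_\IFtwo(n^k)$ for some $k$. Since the Hamilton family $(\HC_n)$ is $\VNP$-complete under projections and is itself a uniform family, it belongs to $\unifVNP_\IFtwo$; by hypothesis it has polynomial-size arithmetic $\IFtwo$-circuits, which combined with $\VNP$-completeness yields $\VP_\IFtwo = \VNP_\IFtwo$. To obtain $\parityP \subset \size(n^k)$, given $L \in \parityP$ witnessed by $f \in \sharpP$ (so $L = \{x : f(x) \text{ is odd}\}$), I would define
$$P_n(X_1, \ldots, X_n) = \sum_{c \in \{0,1\}^n} (f(c) \bmod 2) \cdot \interpol{X}{c}{n}.$$
By the positive-characteristic version of Valiant's criterion (Proposition~\ref{prop:valiant-crit}), $(P_n) \in \unifVNP_\IFtwo$ and thus has $\IFtwo$-circuits of size $O(n^k)$. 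Evaluating $P_n$ at $x \in \{0,1\}^n$ returns exactly $f(x) \bmod 2$, so translating each arithmetic $\IFtwo$-gate into its Boolean counterpart (addition as XOR, multiplication as AND) gives a Boolean circuit of size $O(n^k)$ deciding $L$.

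For the backward direction, assume $\VP_\IFtwo = \VNP_\IFtwo$ and $\parityP \subset \size(n^k)$. Given $(P_n) \in \unifVNP_\IFtwo$, Valiant's criterion places the coefficient function $f$ of $P_n$ in $\parityP$, so by hypothesis $f$ has Boolean circuits of polynomial size. Converting these circuits gate-by-gate produces an arithmetic $\IFtwo$-circuit $B$ of polynomial size computing $f$ on Boolean inputs, but $B$ may compute a polynomial of exponential formal degree, precluding a direct use of Lemma~\ref{lem:ham}. To circumvent this, I would apply a standard $\VNP = \VNPe$-style gadget on $B$: for each gate introduce a Boolean variable $z_i$ representing its output together with a constant-size polynomial enforcing consistency with the gate's operation, and multiply by the indicator that the output gate evaluates to $1$. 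The Boolean sum over $z$ then gives $f(c) = \sum_z F(c,z)$ for a circuit $F$ of polynomial size and polynomial formal degree. Writing
$$P_n(X) = \sum_{c,z} F(c,z) \cdot \interpol{X}{c}{n},$$
the inner circuit in the variables $(X,c,z)$ has polynomial size $s$ and polynomial total degree $d$, so Lemma~\ref{lem:ham} exhibits $P_n$ as a projection of $\HC_N$ with $N = (sd)^{O(1)} = n^{O(1)}$. Composing with polynomial-size $\IFtwo$-circuits for $(\HC_m)$ provided by $\VP_\IFtwo = \VNP_\IFtwo$ (say of size $m^\gamma$) yields $\IFtwo$-circuits of fixed polynomial size for $P_n$, establishing $\unifVNP_\IFtwo \subset \asize_\IFtwo(n^{k'})$ for some $k'$ depending on $k$ and $\gamma$.

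The hard part is the degree management in the backward direction. A naive conversion of the Boolean circuit for the coefficient function into an arithmetic circuit controls only its size, not its formal degree, whereas Lemma~\ref{lem:ham} requires polynomial formal degree to yield a small Hamilton projection. The $\VNPe$-style gadgetry, trading formal degree for auxiliary Boolean summation variables, is precisely what bridges this gap and enables the final combination with the $\VP_\IFtwo = \VNP_\IFtwo$ assumption.
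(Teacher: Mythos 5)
Your forward direction matches the paper: using $\HC_n \in \unifVNP_\IFtwo$ to get $\VP_\IFtwo = \VNP_\IFtwo$, then constructing the multilinear indicator-sum polynomial from a $\parityP$ language and translating its small $\IFtwo$-circuit into a Boolean circuit.

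In the backward direction the architecture is also the paper's: feed the coefficient function (in $\parityP$) to the $\size(n^k)$ hypothesis, trade formal degree for auxiliary Boolean variables via the standard $\VNPe$-style gadget (this is exactly what the paper calls ``the usual method'' in the proof of Valiant's criterion), apply Lemma~\ref{lem:ham}, and close with $\VP_\IFtwo=\VNP_\IFtwo$. But there is a genuine gap in your intermediate display $P_n(X)=\sum_{c,z}F(c,z)\cdot\interpol{X}{c}{n}$. A $\unifVNP$ polynomial has degree $d=n^{O(1)}$ in each variable, so its coefficients are indexed by multi-indices $m=(m_1,\dots,m_n)\in\{0,\dots,d\}^n$, encoded by $\tilde n = n\log d$ bits; the correct expansion is $P_n = \sum_{m}\phi(m)\prod_i X_i^{m_i}$. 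Your $\interpol{X}{c}{n}=\prod_i X_i^{c_i}(1-X_i)^{1-c_i}$ with $c\in\{0,1\}^n$ (i) only ranges over multilinear exponent patterns, so it cannot recover a general $P_n$ of polynomial degree, and (ii) is the interpolation basis, not the monomial basis, so even for multilinear $P_n$ pairing it with the \emph{coefficient} function $f$ yields the wrong polynomial. The paper addresses this by letting the Boolean input to the coefficient circuit have $\tilde n$ bits, accounting for the overhead by replacing $n$ with $\tilde n = n\log n$ in all bounds, and (implicitly) attaching a small arithmetic gadget that computes the actual monomial $\prod_i X_i^{m_i}$ from the bits of $m$ via repeated squaring — that gadget has size $O(n\log d)$ and formal degree $O(nd)$, both polynomial, and is what should appear in place of $\interpol{X}{c}{n}$. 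Once that replacement is made, your use of Lemma~\ref{lem:ham} and the $\VP_\IFtwo=\VNP_\IFtwo$ hypothesis goes through exactly as in the paper.
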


\begin{proof}
Suppose that $\unifVNP_{\IFtwo} \subset \asize_\IFtwo (n^k)$. Then the  Hamiltonian polynomials $(\HC_n)$ has $O(n^k)$ size circuits and thus
$\VP = \VNP$ over $\IFtwo$. Let $L \in \parityP$ and the corresponding function
$f\in\sharpP$ so that
$$x \in L \iff f(x)\text{ is odd}.$$
Consider the sequence of polynomials $P_n \in \IFtwo[X_1,\ldots,X_n]$ associated to $L$:
$$P_n(X_1,\ldots,X_n) = \sumonwords{x}{n} f(x) \interpol{X}{x}{n}.$$
This family belongs to $\unifVNP$ over $\IFtwo$. Hence, $P_n$ has $O(n^k)$
size circuits. It can be simulated by a Boolean circuit of the same size
within a constant factor, and yields $O(n^k)$ size circuits for $L$. Hence
$\parityP \subset \size(n^k)$.

For the converse, suppose that $\parityP \subset \size(n^k)$ and $\VP_{\IFtwo} = \VNP_{\IFtwo}$, and let $(P_n) \in \unifVNP_{\IFtwo}$.
We can write
$$P_n (X_1,\ldots,X_n) = \sum_{m_1,\ldots,m_n \in \{0,\ldots,d\}} \phi(m_1,\ldots,m_n) \prod_{i=1}^n X_i^{m_i}$$
where $d$ is a bound on the degree of each variable of $P_n$.
Since the coefficients of $P_n$ belong to $\parityP$,
they can be computed by
Boolean circuits of size $O(\tilde{n}^k)$ with $\tilde{n} = n \log n$ (by our hypothesis on circuits
size for $\parityP$ languages and the fact that the function $\phi$ 
takes $n \log d$ bits).

These Boolean circuits can in turn be simulated by (Boolean) sums
of arithmetic circuits of size and formal degree $O(\tilde{n}^k)$
by the usual method (see e.g. the proof of Valiant's criterion in~\cite{Burgisser00}).

Hence we have written $P_n = \sum_{\tilde{m}} \psi(\tilde{m}) X^{\tilde{m}}$,
i.e. $P_n$ is a sum over $O(\tilde{n}^k)$ variables in $\IFtwo$ of an arithmetic circuit $\psi$ of size $O(\tilde{n}^k)$, and the degree of $\psi$ is $O(\tilde{n}^k)$.
By Lemma~\ref{lem:ham}, $P_n$ is a projection of $\HC_{\tilde{n}^{O(k)}}$.
By hypothesis, the uniform family
$(\HC_n)$ has $O(n^k)$ arithmetic circuits. Hence, $(P_n)$ has
arithmetic circuits of size 
$n^{O(k^2)}$.
\end{proof}

\section*{Acknowledgements}

We thank Guillaume Malod for useful discussions (in particular on
Lemma~\ref{lem:ham}) and Thomas Colcombet for some advice on the presentation.

\bibliographystyle{plain}
\bibliography{vnp}

\end{document}